\pdfoutput=1

\documentclass{article}

\usepackage{arxiv}

\usepackage[utf8]{inputenc} % allow utf-8 input
\usepackage[T1]{fontenc}    % use 8-bit T1 fonts
\usepackage{url}            % simple URL typesetting
\usepackage{booktabs}       % professional-quality tables
\usepackage{amsfonts}       % blackboard math symbols
\usepackage{nicefrac}       % compact symbols for 1/2, etc.
\usepackage{microtype}      % microtypography

\usepackage[autostyle=true]{csquotes} % Required to generate language-dependent quotes in the bibliography

%%% OWN PACKAGES
\usepackage{bm}
\usepackage{bbm}
\usepackage{graphics}
\usepackage{hyperref}
\usepackage{dsfont}
\usepackage{enumitem}
\usepackage{amsmath}
\usepackage{amsfonts}
\usepackage{amssymb}
\usepackage{mathtools}
\usepackage{subcaption}
\usepackage[table]{xcolor}
\usepackage{tcolorbox}
\usepackage{pdfpages}
\usepackage[standard]{ntheorem}
\usepackage{microtype}
\usepackage[ruled]{algorithm2e}
\usepackage{array}
\RequirePackage{scrextend}
\changefontsizes[14]{11}

\RequirePackage{tabularx}
\usepackage[natbibapa]{apacite}

%%%%%% BIB START
\usepackage{natbib}
\bibliographystyle{apacite}

\renewcommand\cite{\citep}

\newcolumntype{H}{>{\setbox0=\hbox\bgroup}c<{\egroup}@{}}
\newcolumntype{L}[1]{>{\raggedright\let\newline\\\arraybackslash\hspace{0pt}}m{#1}}
\newcolumntype{C}[1]{>{\centering\let\newline\\\arraybackslash\hspace{0pt}}m{#1}}
\newcolumntype{R}[1]{>{\raggedleft\let\newline\\\arraybackslash\hspace{0pt}}m{#1}}

\mathtoolsset{
	showonlyrefs,
	%mathic % or mathic = true
}

\title{A multiple testing framework for diagnostic accuracy studies with co-primary endpoints}

\author{
  Max Westphal\thanks{Correspondence to: Max Westphal,\ \ \href{mailto:mwestphal@uni-bremen.de}{mwestphal@uni-bremen.de},\ \ \url{https://orcid.org/0000-0002-8488-758X}}\\
  Institute for Statistics\\
  University of Bremen\\
  Bremen, Germany \\
   \And
 Antonia~Zapf \\
  Institute of Medical Biometry\\ and Epidemiology\\
  UKE Hamburg\\
  Hamburg, Germany\\
   \And
  Werner~Brannath \\
  Institute for Statistics and\\ 
  Competence Center for\\
  Clinical Trials Bremen\\
  University of Bremen\\
  Bremen, Germany \\
}

\date{March 22, 2020}

%% Operators - Start

\DeclareMathOperator{\argmax}{argmax}

\DeclareMathOperator{\cov}{cov}

\DeclareMathOperator{\FWER}{FWER}

\DeclareMathOperator{\diag}{diag}

\DeclareMathOperator{\Acc}{Acc}
\DeclareMathOperator{\bAcc}{bAcc}

\DeclareMathOperator{\se}{se}

\DeclareMathOperator{\CIbm}{\mathbf{CI}}
\DeclareMathOperator{\Bin}{Bin}

\DeclareMathOperator{\Se}{Se}
\DeclareMathOperator{\Sp}{Sp}
\DeclareMathOperator{\EFP}{EFP}
\DeclareMathOperator{\Beta}{Beta}
\DeclareMathOperator{\mBeta}{mBeta}
\DeclareMathOperator{\mBin}{mBin}

\DeclareMathOperator{\MAE}{MAE}
\DeclareMathOperator{\pmin}{pmin}

\newcommand{\pr}{\mathbb{P}}
\newcommand{\E}{\mathbb{E}}

\newcommand{\hatf}{\hat{f}}

\newcommand{\one}{\mathds{1}}
\newcommand{\setM}{\mathcal{M}}
\newcommand{\setN}{\mathcal{N}}
\newcommand{\setS}{\mathcal{S}}
\newcommand{\setT}{\mathcal{T}}
\newcommand{\setL}{\mathcal{L}}
\newcommand{\setE}{\mathcal{E}}

\newcommand{\thetahat}{\hat{\theta}}

\newcommand{\setP}{\mathcal{P}}

\newcommand{\setH}{\mathcal{H}}
\newcommand{\setV}{\mathcal{V}}

\newcommand{\D}{\mathfrak{D}}
\newcommand{\Sehat}{\widehat{\Se}}
\newcommand{\Sphat}{\widehat{\Sp}}
\newcommand{\Sehatbm}{\widehat{\Sebm}}
\newcommand{\Sphatbm}{\widehat{\Spbm}}

\newcommand{\CISebm}{\CIbm^{\Se}}
\newcommand{\CISpbm}{\CIbm^{\Sp}}
\newcommand{\TSe}{T^{\Se}}
\newcommand{\TSp}{T^{\Sp}}
\newcommand{\TSebm}{\Tbm^{\Se}}
\newcommand{\TSpbm}{\Tbm^{\Sp}}

\newcommand{\Sebm}{\mathbf{Se}}
\newcommand{\Spbm}{\mathbf{Sp}}
\newcommand{\Tbm}{\bm{T}}
\newcommand{\bb}{\bm{b}}
\newcommand{\bbhat}{\hat{\bb}}
\newcommand{\BBhat}{\hat{\BB}}

\newcommand{\BB}{\bm{B}}

\newcommand{\tra}{^\top}
\newcommand{\given}{\, |\,}
\newcommand{\EFPhat}{\widehat{\EFP}}
\newcommand{\EFPbm}{\mathbf{EFP}}
\newcommand{\EFPbmhat}{\widehat {\EFPbm} }
\newcommand{\op}{{\bm \oplus}}

\newcommand{\p}{{p}}

\mathtoolsset{
	showonlyrefs,
	%mathic % or mathic = true
}

\hypersetup{pdftitle={A multiple testing framework for diagnostic accuracy studies with co-primary endpoints}} % Set the PDF's title to your title
\hypersetup{pdfauthor={Max Westphal, Antonia Zapf, Werner Brannath}} % Set the PDF's author to your name
\hypersetup{pdfkeywords={artificial intelligence, machine learning, medical device, medical testing, model selection}} % Set the PDF's keywords to your keywords
\hypersetup{
		%draft,
		colorlinks,
		breaklinks=true,
		linkcolor={red!50!black},
		citecolor={blue!50!black},
		urlcolor={red!50!black}
}

\begin{document}

\maketitle

\begin{abstract}
Major advances have been made regarding the utilization of artificial intelligence in health care. In particular, deep learning approaches have been successfully applied for automated and assisted disease diagnosis and prognosis based on complex and high-dimensional data. However, despite all justified enthusiasm, overoptimistic assessments of predictive performance are still common. Automated medical testing devices based on machine-learned prediction models should thus undergo a throughout evaluation before being implemented into clinical practice. In this work, we propose a multiple testing framework for (comparative) phase III diagnostic accuracy studies with sensitivity and specificity as co-primary endpoints. Our approach challenges the frequent recommendation to strictly separate model selection and evaluation, i.e. to only assess a single diagnostic model in the evaluation study. We show that our parametric simultaneous test procedure asymptotically allows strong control of the family-wise error rate. Moreover, we demonstrate in extensive simulation studies that our multiple testing strategy on average leads to a better final diagnostic model and increased statistical power. To plan such studies, we propose a Bayesian approach to determine the optimal number of models to evaluate. For this purpose, our algorithm optimizes the expected final model performance given previous (hold-out) data from the model development phase. We conclude that an assessment of multiple promising diagnostic models in the same evaluation study has several advantages when suitable adjustments for multiple comparisons are implemented. 
\end{abstract}

% keywords can be removed
\keywords{artificial intelligence \and machine learning \and medical device \and medical testing \and model selection}
%%%%%%%%%%%%%%%%%%%%%%%%%%%%%%%%%%%%%%%%%%%%%%%%%%%%%%%%%%%%%%%%%%%%%%%%%%%%%%%%%
\section{Introduction}\label{sec:introduction}
Research projects concerned with the application of machine learning techniques for disease diagnosis and prognosis have steadily grown in number over the last years. This is indicated, among others, by several review and overview publications \cite{jiang2017, litjens2017, miotto2017, ching2018}. In particular, the capabilities of end-to-end deep learning approaches on such supervised learning tasks are highly promising. For instance, vast advances have been reported in the literature regarding cancer diagnosis with deep neural networks \cite{hu2018}. End-to-end deep learning refers to a trend involving deep (neural network) model architectures which are able to learn highly complex relationships between predictors and the target variable while having less parameters than traditional (more shallow) models with comparable performance \cite{DL}.
In the training process, highly complex features are derived automatically by the learning algorithm \cite{lecun2015}. This framework contrasts the traditional pipeline of domain specific data preprocessing and hand-crafted features in combination with simpler prediction models.
Despite all the recent success of machine learning, there are still challenges regarding over-optimistic conclusions drawn from finite datasets which may to a large extend be attributed to the following two (broad) categories:
\begin{enumerate}
	\item \textbf{Study design and reporting:} The most popular recommendation to split data for training, selection and evaluation is frequently employed in practice \cite{APM, ESL, ELA, EMLM, DL, HOML}. In the machine learning community, the according datasets are commonly denoted as training, validation and test set. However, the less of the defining properties of the study (e.g. data source, data splitting modalities, performance measure, comparator, etc.) are specified in advance, the more 'opportunities' present themselves to the researchers to influence the results to match their expectations \cite{boulesteix2009, boulesteix2009optimal, jelizarow2010, boulesteix2013}. For diagnostic accuracy studies, several design-related sources of bias have been identified in the literature \cite{lijmer1999, whiting2004, rutjes2006, whiting2011, schmidt2013}. According to \citet{rutjes2006}, relevant sources for overoptimistic conclusions are
	selection of non-consecutive patients,
	analysis of retrospective data
	and focus on severe cases and healthy controls. 
	Another issue is an in-transparent communication (conscious or not) of the results \cite{ochodo2013, TRIPOD, STARD}. This may prevent other research teams to replicate the findings on similar problems (replicability) or even on the same data (reproducibility) \cite{jasny2011, peng2011}. 
	\item \textbf{Sampling variability}: The true predictive performance of a diagnostic model in the population of interest is not known but only estimated based on data. Even in the ideal case in which the evaluation data is independent of the data used for model development and all modalities of the evaluation study are specified in advance in a study protocol, the empirical performance is still a random variable and may realize at a high value just due to chance. From our perspective, this fact is often times overlooked within the machine learning community. 
\end{enumerate}

%This work is primarily concerned with the quantification of sampling variability.
The primary goal of this work is a better utilization of the available data by optimizing study design and statistical analysis without introducing overoptimism. Our primary statistical inference goal is to bound the probability for false positive claims regarding the performance of a prediction model, i.e. the type I error rate. On the other hand, we of course seek to identify a truly good model with high probability. 
It may be argued that methodological research in machine learning or bioinformatics does not (need to) have the goal to control the type I error rate. We agree with this view in the context of the early development of new algorithms and prediction models and acknowledge that exploratory research is of utmost importance for the scientific process. In machine learning applications, this amounts to trying out a wide variety of learning algorithms deemed suitable for the prediction task at hand. This makes sense, as by the no-free-lunch theorem of statistical learning, no single algorithm gives universally best results for all prediction tasks \cite{UML}. Modern algorithms additionally involve the tuning of several hyperparameters. The performance of (deep) neural networks depends for example on depth (number of layers), width (neurons per layer), activation function(s) and further hyperparameters related to regularization (e.g. dropout rate) and optimization (e.g. loss function, learning rate, mini batch size) \cite{bengio2012}. In effect, usually dozens, hundreds or even thousands of candidate models are initially trained and compared.% via hold-out-validation or cross-validation.

However, at some point a reliable statement regarding the model performance is required. This is in particular the case in medical applications where the consequences of a wrong decision could ultimately be life-threatening, for instance when a flawed (automated) diagnostic tool is implemented into clinical practice. Additionally, depending on the regulatory context, a throughout evaluation study might even be mandatory. \citet{pesapane2018} give an overview over the current regulatory framework for artificial intelligence empowered medical devices in the EU and the US from the viewpoint of radiology. They also comment on future developments, in particular the new Medical Devices Regulation (MDR) and the new In Vitro Diagnostic Medical Device Regulation (IVDR) in the EU \citep{MDR, IVDR} which will certainly be relevant for the application of machine learning methods for medical testing purposes.% and apply from May 2020 and May 2022, respectively. 

%%%%%%%%%%%%%%%%%%%%%%%%%%%%
Throughout investigations of devices for automated medical testing purposes are still rather scarce (as are such devices themselves). A recent positive example covers the IDx-DR device which allows diagnosis of diabetic retinopathy in diabetic patients via deep learning based on retinal color images \cite{abramoff2016}. Diagnostic accuracy was assessed in an extensive observational clinical trial involving 900 patients, which lead to a marketing permit by the FDA\footnote{\url{https://www.accessdata.fda.gov/cdrh_docs/reviews/DEN180001.pdf} (accessed March 22, 2020)}. In this study, only a single model was (successfully) evaluated on the final dataset. This default strategy is often advised in machine learning and diagnostic accuracy research and is reasonable if previously available data for model training and selection is (a) large in number and (b) representative of the intended target population. However, when potentially hundreds of modeling approaches are compared on (quantitatively and/or qualitatively) modest datasets, the model selection process can rarely be concluded with confidence. While the default approach enables an unbiased estimation and simple statistical inference, one is thus bound to this one-time model choice under uncertainty. In effect, this strategy is quite inflexible as it is impossible to retrospectively correct a flawed model selection without compromising the statistical inference in the evaluation study.

To address this issue, \citet{EOMPM1, IMS} recently adapted a multiple testing approach from \citet{hothorn2008} to explicitly take into account that multiple models are assessed simultaneously on the same evaluation dataset. In effect, model selection can be improved with help of the test data \citep{IMS}. The employed simultaneous test procedure is based on the (approximate) multivariate normal distribution of performance estimates. An advantage of this procedure is that the multiplicity adjustment needs to be less strict when candidate models give (highly) similar predictions. This approach allows approximate control of the overall type I error rate and construction of simultaneous confidence regions as well as corrected (median-conservative) point estimates. Moreover, it was found that selecting multiple promising models can increase statistical power for the evaluation study compared to the default approach where only the best (cross-)validation model is evaluated. The main goal of this work is to extend this existing framework to diagnostic accuracy trials with co-primary endpoints sensitivity and specificity. 

In this context, a concise connection to the taxonomy of diagnostic research seems appropriate. Among the many resources available regarding this topic, terminology varies quite substantially \cite{pepe2003, pepe2008, zhou2009, CD}. We will adhere to \citet[chapter 2]{CD}, who established a system of five study phases in diagnostic research. Our work is mainly tailored towards cross-sectional phase III (comparative) diagnostic test accuracy (DTA) studies which ask the question:
\begin{quote}
	"Among patients in whom it is clinically sensible to suspect the target disorder, does the level of the test result distinguish those with and without the target disorder?" \newline \phantom{.} \hfill \cite{CD}
\end{quote}
The goal of a such studies is generally to provide evidence that a new index (or candidate) diagnostic test outperforms either a given comparator or, if no such competing test is available, a given performance threshold. Performance, i.e. sensitivity and specificity, is measured with regard to the (defined) ground truth which is in the optimal case derived by a so-called gold standard. It may however be infeasible to implement such a gold standard in the DTA study or there may not even exist one. A reference standard is thus typically defined as the best available approximation of the gold standard. Common reasons to install an index test in clinical practice are either lower invasiveness or costs. This balances the fact that, by definition, the index test can not have better performance than the reference standard. This is relevant for predictive modeling as, from the model evaluation perspective, a machine-learned prediction model is nothing more than a (potentially very complex) diagnostic test - at least when embedded in a medical device used for automated or assisted disease diagnosis. We refer to \citet[chapters 1-3]{CD}
for a general introduction in diagnostic research and a throughout treatment of DTA studies. Study design and reporting issues are discussed by \citet{ochodo2013}, \citet{TRIPOD} and \citet{STARD}.

Multiple testing methodology is not particularly popular, neither for medical test evaluation nor in machine learning \cite{IMS}. As indicated above, the recommendation to evaluate a single final model or test on independent data has prevailed in both domains. In this case, no adjustments for multiplicity are necessary as the final model is selected independently of (i.e. prior to) the evaluation study.  
Our approach has some overlap to so-called benchmark experiments in predictive modeling. Among others, \citet{hothorn2005} and \citet{demvsar2006} showcase different approaches for the comparison of learning algorithms over multiple (real or artificial) datasets and inference regarding their expected performances. In contrast, we aim to evaluate prediction models (conditional on the learning data), which is arguably more important at the end of the model building process - right before implementation of a specific model in (clinical) practice. In medical testing applications, it is natural to consider only a single index (candidate) test. This is particularly the case, when different tests are based on separate biological samples. It would then often not be considered ethically justifiable to assess more than a single index test and the reference test on the same patients. In contrast, when all tests are derived from the same data sample, e.g. retinal images, the number of index tests (diagnostic models) should be primarily guided by statistical considerations as long as the type I error rate can be controlled.

The primary goal of this work is to provide a multiple testing framework for the assessment of sensitivity and specificity as co-primary endpoints for multiple diagnostic procedures (prediction models) on the same data. To our knowledge, such an approach has not been proposed in the literature so far. In section \ref{sec:model}, we introduce core notation and assumptions and establish our inference framework. In section \ref{sec:practical}, we describe a novel Bayesian approach to determine the optimal number of models to include in the final evaluation study based on preliminary data. 
In section \ref{sec:experiments}, we show results of several numerical experiments: Our main goals are (a) an assessment of finite sample properties regarding (e.g.) control of the type I error rate and (b) a comparison of different approaches for model selection prior to the evaluation study. 
Finally, in section \ref{sec:discussion}, we summarize our findings, point out limitations of our framework and give an outlook on possible extensions.% on possible extensions.

\section{Statistical model}\label{sec:model}

\subsection{Prerequisites}\label{sec:assumptions}

We consider predicting a binary label $Y \in \{0,1\}$ based on $P \in \mathbb{N}$ features $\bm X \in \mathbb{R}^P$. Our canonical example in this work will be the distinction between diseased (having a given target condition, $Y=1$) and healthy ($Y=0$) subjects. In medical applications, the target $Y$ should be obtained by the (best available) reference standard and forms the ground truth for learning and evaluation. The goal of supervised machine learning is to provide a prediction model $\hatf: \bm x \mapsto \hat{y}$ with high performance. In practice, this is achieved by a learning algorithm $A$ (e.g. stochastic gradient descent) which outputs a prediction model $\hatf$ (e.g. a neural network) based on learning data $\setL = \{(\bm x_i, y_i)\}_{i=1}^{n_\setL}$. We write $\hatf=A(\setL)$ for short. We emphasize that we are concerned with the assessment of a specific model $\hatf$ trained on specific data $\setL$ and not with the properties of the algorithm $A$ that was used to learn $\hatf$. Finally, we note that, from the viewpoint of predictive modeling, disease diagnosis and prognosis essentially only differ regarding the time lag between capturing the features $\bm x$ and the label $\bm y$ - at least if this lag is approximately equal in the population of interest. Otherwise, a time to event analysis may be more appropriate. For the sake of brevity, we will focus on diagnosis tasks in the following.
In medical diagnosis, sensitivity ($\Se$) and specificity are ($\Sp$) are often both assessed simultaneously. We thus consider the tuple $\theta=(\Se, \Sp)$ as the performance measure of interest where  
\begin{align}\label{eq:sesp}
\Se = \pr (\hatf(\bm X)=1 \given Y=1) \quad \text{and} \quad \Sp = \pr (\hatf(\bm X)=0 \given Y=0).
\end{align}
This is advantageous compared to just considering the overall accuracy $\Acc = \pr(\hatf(X)=Y) = {\varrho \Se + (1-\varrho) \Sp}$ in the sense that neither the accuracy in healthy or diseased alone dominates our perception of the predictive performance if the disease prevalence $\varrho = \pr(Y=1)$ is small.

As we have outlined in section \ref{sec:introduction}, it is usually recommended to conduct a final performance assessment on independent evaluation or test data $\setE= \{(\bm x_i, y_i)\}_{i=1}^{n_\setE}$. Performance estimates on the learning data are generally a bad indicator of the true performance, in particular when a highly complex model might have overfitted the learning data. We assume that $\setE$ is an i.i.d. sample from the unknown joint distribution $\D_{(\bm X, Y)}$ of $\bm X$ and $Y$.  %Unless mentioned otherwise, all expectations and probabilities are taken with respect to $\D$. 
Moreover, we assume that $M \in \mathbb{N}$ models $\hatf_m=A_m(\setL),\ m \in \setM=\{1,\ldots,M\}$, have been initially trained in the learning phase via different learning algorithms $A_m$. A subset $\setS \subset \setM$ of these initial candidate models is selected for evaluation. To simplify the notation we assume that the candidate models are ordered such that $\setS= \{1,\ldots,S\}$ with $1\leq S\leq M$. Different approaches to identify $\setS$ are discussed in section \ref{sec:planning}. This is implemented in practice by training models first only on the so-called training data $\setT \subset \setL$, a subset of the learning data. The remaining validation data $\setV = \setL \setminus \setV$ is than used for a performance assessment of the resulting preliminary models $\hatf^-_m=A_m(\setT)$. The model(s) selected for evaluation are than re-trained with all available data $\setL= \setT \cup \setV$ before going into the evaluation study with the expectation that this (slightly) increases their performance. This procedure is know as (simple) hold-out validation. There exist several variations and alternative strategies such as cross-validation or bootstrapping \cite{APM, DL}. They can stabilize model selection, are however computationally more expensive as models have to be re-trained several times which is not always feasible in practice. We will thus focus on simple hold-out validation in the numerical experiments in this work.

Note that by performance, we refer to the true population performance $\theta=(\Se, \Sp)$ of a model $\hatf$ as defined in \eqref{eq:sesp} for sensitivity and specificity. Validation performance is referring to the empirical performance of the preliminary model $\hatf^-$ (trained only on $\setT$) estimated on the validation data and denoted by $\thetahat(\setV)=(\Sehat(\setV)$, $\Sphat(\setV))$, in slight abuse of notation. In contrast, evaluation or test performance is referring to the empirical performance of the final model $\hatf$ (re-trained on $\setL$). It is estimated on the final evaluation or test dataset $\setE$ and denoted as $\thetahat=(\Sehat, \Sphat)$. We will often deal with several models $\hatf_m$, $m\in \setM$, and will usually refer to a specific model $\hatf_m$ just by the corresponding index $m$.

\subsection{Study goal}\label{sec: goals}

Our overall goal is to identify a diagnostic or prognostic prediction model with high sensitivity and specificity from the $M$ initially trained models. Additionally, we aim to show superiority of at least one model compared to prespecified thresholds $\theta_0=(\Se_0, \Sp_0) \in (0,1)^2$. 
Alternatively, $\theta_0 =\theta(\hatf_0)$ is the unknown performance of a comparator $\hatf_0$, i.e. an established diagnostic testing procedure which is also estimated in the evaluation study. In the following, we will focus on the former scenario which is also simpler to implement in our numerical experiments in section \ref{sec:experiments}. Throughout this work, we assume that the evaluation study is declared as successful if and only if superiority in both endpoints for at least one candidate model $m \in \setS$ relative to the the comparator $\theta_0$ can be demonstrated. This so-called co-primary endpoint analysis is the standard approach in confirmatory diagnostic accuracy studies \citep{committee2009}. \citet{vach2012} discuss other approaches in detail. The system of null hypotheses is thus given by
\begin{align}\label{eq:hyp}
\setH_\setS = \{H_m: H_m^{\Se} &\cup H_m^{\Sp},\ m\in \setS \}\\
\text{where} \quad H_m^{\Se}: \Se_m \leq \Se_0 \quad &\text{and}\quad H_m^{\Sp}: \Sp_m \leq \Sp_0.
\end{align}
Equivalently, the hypothesis system can be expressed as
\begin{align}\label{eq:hyp_alt}
\setH_\setS = \{H_m: \vartheta_m = \min(\Se_m, \Sp_m +\Delta_0) \leq \vartheta_0,\ m\in \setS \}.
\end{align}
Hereby, $\Delta_0=\Se_0-\Sp_0 \in (-1,1)$ specifies to which extent we prioritize sensitivity over specificity (or vice versa). For simplicity, we only consider $\Delta_0=0$ in the numerical experiments in this work.
The goal of the evaluation study is to estimate the unknown sensitivities and specificities, provide confidence bounds for these estimates and obtain a multiple test decision for the hypothesis system \eqref{eq:hyp}.

A multiple test $\bm \varphi = (\varphi_1, \ldots,\varphi_S)$ depends on the evaluation data $\setE$ and results in a binary vector ${\bm \varphi \in \{0,1\}^S}$ of test decisions whereby $H_m$ is rejected if and only if $\varphi_m =1$. In a regulatory setting, it is usually required that $\bm \varphi$ shall control the family-wise error rate (FWER) strongly at significance level $\alpha$, i.e. we require for all possible parameter configurations $\bm \theta$
\begin{align}\label{FWERcontrol}
\FWER_{\bm \theta}(\bm \varphi) = \pr_{ \bm \theta}\left( \bigcup_{m \in \setS_0} \{\varphi_{m} = 1\}\right) \leq \alpha.
\end{align}
Hereby, $\setS_0=\setS_0(\bm \theta, \theta_0) \subset \setS$ is the index set of true null hypothesis, i.e. we have either $\Se_m \leq \Se_0$ or $\Sp_m \leq \Sp_0$ for all $m \in \setS_0$. Any parameter $\bm \theta$ for which $\FWER_{\bm \theta}(\bm \varphi)$ becomes maximal is called a least favorable parameter configuration (LFC). Besides $\Sebm$ and $\Spbm$, the FWER in \eqref{FWERcontrol} also depends on the dependency structures of $\Sehatbm$ and $\Sphatbm$. For conciseness, we will however stick to the incomplete notation $\bm \theta =(\Sebm, \Spbm)$. In this work, we will restrict our attention to asymptotic control of the FWER, i.e. \eqref{FWERcontrol} shall only hold as $n=n_\setE \rightarrow \infty$. We will investigate the finite sample FWER in realistic and least-favorable settings in section \ref{sec:experiments}. As described by \citet{EOMPM1}, we extend $\bm \varphi$ to a multiple test for the actually relevant hypothesis system 
\begin{align}\label{eq:hyp_ext}
\setH = \setH_\setM = \{H_m: \vartheta_m \leq \vartheta_0,\ m\in \setM \}
\end{align}
concerning all initial candidate models $\hatf_m$, $m \in \setM$, by setting $\varphi_m = 0$ for all $m\in \setM\setminus \setS$. That is to say, a model $m$ cannot be positively evaluated ($\varphi_m=1$) when it is not selected for evaluation. This natural definition has two consequences. Firstly, the extended test retains (asymptotic) FWER control as only non-rejections are added. Secondly, we can compare different model selection strategies because the extended multiple test always operates on $\setH=\setH_\setM$ and not only on $\setH_\setS$ \citep{EOMPM1}.

\subsection{Parameter estimation}\label{sec:estimation}

We assume that a subset of promising models $\setS = \{1,\ldots,S\} \subset \setM$ has been selected prior to the evaluation study. Suitable strategies for that matter are presented in section \ref{sec:planning}. The observed feature-label data $\setE = \{(\bm x_i, y_i)\}_{i=1}^{n}$ from the evaluation study is transformed to the actual relevant binary similarity matrices $\bm Q^{\Se} \in \{0,1\}^{n_1 \times S}$ and $\bm Q^{\Sp} \in \{0,1\}^{n_0 \times S}$ for the diseased ($y=1$) and healthy ($y=0$) subpopulation, respectively. Hereby, $n_1$ and $n_0$ are the number of diseased and healthy subjects of the $n=n_\setE=n_1 + n_0$ evaluation study subjects. The entry of $\bm Q$ in row $i$ and column $m$ is equal to one if the prediction of the $i$-th observation by the $m$-th model is correct and zero if it is wrong. The sample averages
\begin{align}
\Sehat=\frac{1}{n_1}\sum_{i=1}^{n} \one(\hatf(\bm x_i)=y_i=1) \quad \text{and} \quad \Sphat=\frac{1}{n_0}\sum_{i=1}^{n} \one(\hatf(\bm x_i)=y_i=0).
\end{align}
can thus be calculated as the column means of these similarity matrices \cite{EOMPM1, IMS}.
Moreover, we can estimate the covariances $\bm \Sigma^{\Se}=\cov(\Sehatbm)$ and $\bm \Sigma^{\Sp}=\cov(\Sphatbm)$ as the sample covariance matrices of the similarity matrices divided by factors of $n_1$ and $n_0$, respectively.

A problem arises when a sample proportion of one (or zero) is observed. This is not unlikely to happen in the realistic scenario that either sensitivity or specificity of several models are close to one and the evaluation sample size is not large. In this case, say if $\Sehat_{m}=1$, the plug-in variance estimate $\Sehat_{m}(1-\Sehat_{m})/n_1$ collapses to zero. In effect, $\hat{\bm \Sigma}^{\Se}$ becomes singular and the statistical test procedure introduced in the next section is no longer directly applicable. Different approaches to deal with this problem have been described in the univariate context \cite{IBS, jung2017}. A popular approach is to employ the posterior mean of a Bayesian Beta-binomial model as a point estimator for the unknown mean. When taking a uniform, i.e. $\Beta(1,1)$, prior, this results in the point estimate $(u+1)/(n+2)$ where $u$ is the observed number of correct predictions, compare \citet[Chapter 9]{IBS}. Simply speaking, this amounts to adding two pseudo-observation - one correct and one wrong prediction. As a consequence, estimated proportions are shrunk (slightly) towards 0.5.

This idea can be transferred to the multivariate case ($S>1$) which is more complex as the correlation between empirical performances comes into play. \citet{SIMPle} recently derived a multivariate Beta-binomial model which can be employed for that matter. Besides marginal prior distributions, it also involves a prior on second-order moments, and thus the correlation structure. A vague, conservative prior can be specified as $S$ independent uniform distributions. Marginally, this again amounts to adding two pseudo observations (one correct and one false prediction) per model. Moreover, for each model pair half of the added correct pseudo-prediction is counted as common. We can then use the posterior mean and covariance as estimates $\Sehatbm$ and $\widehat{\bm \Sigma}^{\Se} = \widehat{\cov}(\Sehatbm)$ and independently apply the same approach for $\Sphatbm$ and $\widehat{\bm \Sigma}^{\Sp}$.
For finite sample sizes, the influence of this adaptation is conservative as mean estimates are shrunk towards $0.5$. Variance estimates are slightly inflated. In particular, all variances are guaranteed to be strictly greater than zero which solves the initially mentioned problem. Moreover, the correlation matrix is slightly shrunk towards the identity matrix. We have employed these regularized estimators in all numerical experiments in section \ref{sec:experiments}. Note that they are asymptotically equivalent to their naive counterparts and the asymptotic properties described in the next section thus remain the same. A detailed description can be found in appendix \ref{app:estimation}.

\subsection{Statistical inference}\label{sec:inference}

Our goal is to apply the so-called maxT-approach to hypothesis system \eqref{eq:hyp} \citep{hothorn2008}. This simultaneous test procedure has previously been utilized for the assessment of the overall classification accuracy of several binary classifiers \citep{EOMPM1, IMS}. This multiple test enables us to reduce the need for multiplicity adjustment in case several similar models are evaluated. Similarity is measured by the correlation of (empirical) performances which depends on the probability of a common correct prediction in our context \cite{EOMPM1}. The maxT-approach relies on determining a common critical value $c_\alpha$, which depends on the empirical correlation structure, such that each hypothesis $H_m$ is rejected if and only if 
\begin{align}
T_m = \min(\TSe_m, \TSp_m) > c_\alpha.
\end{align}
The test statistics are defined in a standard manner via 
\begin{align}\label{eq:teststat_def}
T_m^{\Se} = \frac{\Sehat_{m} - \Se_{0}}{\widehat{\se}(\Sehat_{m}) }
\end{align}
whereby $\widehat{\se}(\Sehat_m) = \sqrt{\Sehat_m(1-\Sehat_m)/n_1}$ is the estimated standard error of $\Sehat_m$. The test statistics $\TSp_m$ are defined accordingly. When only a single model is evaluated with regard to co-primary endpoints the tests regarding sensitivity and specificity each need to be conducted at local level $\alpha$. This is because the least favorable parameter configuration (LFC) is of the form
\begin{align}\label{eq:lfc_uni}
\Se=\Se_0 \wedge \Sp=1 \quad \text{or} \quad \Se = 1 \wedge \Sp = \Sp_0,
\end{align}
i.e. one parameter ($\Se$ or $\Sp$) is equal to one and the other parameter lies on the boundary of the null hypothesis. Similarly, for our case of $S$ candidate models, the two cases described in \eqref{eq:lfc_uni} are possible for each dimension $m \in \setS=\{1,\ldots,S\}$, resulting in $2^S$ potential LFCs. We could attempt to identify the single least favorable configuration in terms of multiple testing adjustment which depends on the resulting true correlation matrix. 
However, the computational burden of this extensive approach increases exponentially in $S$ as $2^S$ critical values would have to be computed.

We will instead focus on a more direct approach. Assume we knew if $\Delta_m^{\Se} = \Se_m - \Se_0 < \Sp_m - \Sp_0 =\Delta_m^{\Sp}$ or if $\Delta_m^{\Se} > \Delta_m^{\Sp}$ was true for each $m \in \setS$. For simplicity, we assume that $\Delta_m^{\Se}$ and $\Delta_m^{\Sp}$ are not exactly equal in the following. Define the indicator variables $b_m = \one(\Delta_m^{\Se} < \Delta_m^{\Sp})$, $\bb = (b_1,\ldots,b_S)$ %and $d_m = \one (\Delta_m^{\Sp} > \Delta_m^{\Sp})$ for $m \in \setS$
and the corresponding diagonal matrix $\bm B = \diag(\bm b)$. % and $\bm D = \diag(\bm d)$.
Moreover, we define
\begin{align}\label{eq:B2T2}
{\BB_2} = \begin{pmatrix} \BB \\ \bm I_S - \BB  \end{pmatrix} \in \{0,1\}^{2S\times S} \quad \text{and} \quad \bm{T}_2 = \begin{pmatrix} \TSebm\\ \TSpbm \end{pmatrix} \in \mathbb{R}^{2S \times 1}
\end{align}
whereby $\bm I_S$ is the $S$-dimensional identity matrix. Finally, we can define the $S$-dimensional vector of test statistics
\begin{align}\label{eq:tstat_dagger}
\Tbm ^{\bm b} = ({\BB}_2)\tra \bm{T}_2 \in \mathbb{R}^S.
\end{align}
Our goal is to use this test statistic to obtain a multiple test for the hypotheses system  \eqref{eq:hyp} which can, due to the definition of the variables $b_m$, be rewritten as
\begin{align}\label{eq:hyp_dagger}
\setH_\setS = \{ H_m: b_m \Delta_m^{\Se} + (1-b_m) \Delta_m^{\Sp} \leq 0,\ m\in \setS \}.
\end{align}
The LFC for \eqref{eq:hyp_dagger} is specified by $\Sebm^{\bm b} = \bm b \Se_0 + (\bm 1_S - \bm b)$ and $\Spbm^{\bm b} = \bm b + (\bm 1_S - \bm b) \Sp_0$. That is to say, for each $m \in S$, $\Se_m$ is projected to $\Se_0$ and $\Sp_m$ is projected to one in the case $b_m=1$ and vice versa when $b_m=0$. Then, assuming $n_1/n \rightarrow \varrho \notin \{0,1\}$ as $n = n_1 + n_0 \rightarrow \infty$, under the LFC $\bm \theta^{\bm b} = (\Sebm^{\bm b}, \Spbm^{\bm b})$, we have
\begin{align}\label{eq:dist_result}
\Tbm^{\bm b}_{(n)} \stackrel{\D}{ \longrightarrow} \setN_S(\bm 0, \bm R^{\bm b}), \quad n \rightarrow \infty.
\end{align}
Hereby, the correlation matrix $\bm R^{\bm b}$ is defined as $\BB \bm R^{\Se} \BB + (\bm I_S - \BB) \bm R^{\Sp}  (\bm I_S - \BB) $. This result is proven in appendix \ref{app:inference}.

The vector $\bm b$ and thus the matrices $\BB$ and $\bm B_2$ are of course unknown in practice which makes the following two adjustments necessary. Firstly, we will replace $\bm T^{\bm b}$ with the vector $\bm T$ with entries ${T_m = \min(T_m^{\Se}, T_m^{\Sp}) \leq T^{\bm b}_m}$. Secondly, we need to estimate the correlation matrix $\bm R^{\bm b}$, in order to compute a critical value $c_\alpha$. An obvious candidate is
\begin{align}\label{eq:corr_est}
\hat{\bm R}^{\bm b} = \hat{\BB} \hat{\bm R}^{\Se} \BBhat + (\bm I_S - \BBhat)  \hat{\bm R}^{\Sp}  (\bm I_S - \BBhat)
\end{align}
with $\BBhat = \diag(\bbhat)$ and $\hat{\bb}_m=\one(\Sehat_m -\Se_0 < \Sphat_m -\Sp_0)$ and $\hat{\bm  R}^{\Se}$ and $\hat{\bm  R}^{\Sp}$ derived from the empirical covariances, compare section \ref{sec:estimation}. As all components in \eqref{eq:corr_est} are consistent estimators of their corresponding population quantities, $\hat{\bm R}^{\bm b}$ is a consistent estimator for $\bm R^{\bb}$ by virtue of the continuous mapping theorem, compare appendix \ref{app:inference}. 
To define a multiple test, we calculate a common critical value $c_\alpha$ such that 
\begin{align}\label{eq:fwer_control}
\pr(\max_{m \in \setS}(T^{\bm b}_{m}) \leq c_\alpha) \approx \Phi_{S}(\bm c_\alpha, \hat{\bm R}^{\bm b}) = \int_{-\infty}^{c_\alpha}\ldots \int_{-\infty}^{c_\alpha} \phi_{S}(\boldsymbol{x}, \hat{\bm R}^{\bm b}   )d\bm x = 1- \alpha,
\end{align}
under the 'estimated LFC' $\bm \theta^{\bbhat}$. Hereby $\phi_S$ and $\Phi_S$ denote the density and distribution function of the $S$-dimensional standard normal distribution with mean $\bm 0$ and covariance matrix $\hat{\bm R}^{\bm b}$. 
In practice, $c_\alpha$ can be found by numerical integration, e.g. with help of the \texttt{R} package \texttt{mvtnorm} \cite{mvtnorm}.

Altogether,  we have (a) asymptotic multivariate normality of $\bm T^{\bb}$, compare \eqref{eq:dist_result}, (b) a consistent estimate $\hat{\bm R}^{\bm b}$ of the correlation matrix via \eqref{eq:corr_est} and (c) $\bm T \leq \bm T^{\bm b}$ (deterministically). Following the argumentation of \citet{hothorn2008}, this allows to define a multiple test $\bm \varphi$ via $\varphi_m=1 \Leftrightarrow T_m > c_\alpha$ such that the FWER is asymptotically controlled at level $\alpha$.
We can also construct (e.g.) one-sided simultaneous confidence regions for $\Sebm$ and $\Spbm$ via
\begin{align}\label{eq:CI}
\text{CI}^{\Se}_{1-\alpha, m} =  \left(\Sehat_{m}- c_\alpha \cdot \widehat{\se}(\Sehat_{m}), \ 1\right) \quad \text{and} \quad 
\text{CI}^{\Sp}_{1-\alpha, m} =  \left(\Sphat_{m}- c_\alpha \cdot \widehat{\se}(\Sphat_{m}), \ 1\right).
\end{align}
Note, that the coverage probability $\pr_{\bm \theta}(\Sebm \in \CISebm_{1-\alpha} \wedge \Spbm \in \CISpbm_{1-\alpha})$ may be smaller than $1-\alpha$. Instead, due to the duality between confidence interval and test decision, we have asymptotically %(for large samples)
\begin{align}\label{eq:coverage}
\pr_{\bm \theta} \left(\bigcup_{m \in S} \left\{ \Se_m \notin \CISebm_{1-\alpha, m}\ \wedge\ \Sp_m \notin \CISpbm_{1-\alpha, m} \right\} \right) \leq \alpha.
\end{align} 
When setting $\alpha = 0.5$, the lower confidence bounds can be used as a corrected point estimators $\widetilde{\Sebm}$, $\widetilde{\Spbm}$ with the property that the overall probability that both $\widetilde{\Se}_m$ and $\widetilde{\Sp}_m$ overestimate their target for any $m \in \setS$ is asymptotically bounded by $50 \%$, compare \citet{EOMPM1}.

%%%%%%%%%%%%%%%%%%%%%%%%%%%%%%%%%%%%%%%%%%%%%%%%%%%%%%%%%%%%%%%%%%%%%%%%%%%%%%%%%%%%%%

\section{Practical aspects}\label{sec:practical}

\subsection{Study planning}\label{sec:planning}

A prominent recommendation in machine learning and in medical testing is to strictly separate model development (training and selection) and model evaluation (performance assessment), compare section \ref{sec:introduction}. The most straightforward implementation of this strategy is to select a single final model prior to the evaluation study. Often times this is done by choosing the model with the highest empirical performance on a (hold-out) validation dataset out of the $M$ initially trained models, as described in section \ref{sec:assumptions}. 
When evaluating binary classifiers regarding their overall accuracy, previous work has shown that it is beneficial to evaluate multiple promising models simultaneously, e.g. all models within one standard error of the best validation model. \citet{EOMPM1, IMS} found that this so-called \textit{within\,1\,SE} selection rule leads to evaluation studies with higher power. Additionally, it allows to select the final model based on empirical test performances which in turn leads to an increased average final model performance. On the downside, the point estimate for the final model performance is upward biased which can however be corrected with help of the median-conservative estimator described at the end of the last section. To obtain multiplicity adjusted test decisions and corrected point estimates, \citet{EOMPM1, IMS} applied the so-called maxT-approach based on work by \citet{hothorn2008} which we adopted to the co-primary endpoint setting in section \ref{sec:inference} of this work.

To arrive at a suitable selection of prediction models for the evaluation study, more work is necessary when sensitivity and specificity are assessed simultaneously. We might still use the standard \textit{within\,1\,SE\,(Acc)} rule, but its success can be expected to depend on the disease prevalence $\varrho = \pr(Y=1)$. A simple adaptation is the  \textit{within\,1\,SE\,(bAcc)} rule based on the balanced accuracy
$\bAcc = (\Se+\Sp)/2$
instead of the overall accuracy $\Acc$. While the \textit{within\,1\,SE} approach is intuitive and has proven to work empirically, it lacks a throughout theoretical justification. In particular, the question which multiplier $k\geq 0$ results in the best \textit{within\,$k$\,SE} rule remains open.

In the following, we introduce a more elaborate algorithm for model selection which aims to derive the models to be evaluated in an optimal fashion. We are dealing with a subset selection problem, that is to say that our goal is to find the best subset $\setS \subset \setM$, whereby 'best' is yet to be defined. This topic has been extensively studied in the literature, see e.g. \citet[chapter\,9]{SDT2}, for a throughout treatment from the viewpoint of Bayesian decision theory. From this perspective, the usual approach is to pick the decision (here: the subset $\setS$) from an appropriate decision space (here: $\{0,1\}^M$) which minimizes the posterior expected loss, or equivalently maximizes the posterior expected utility. \citet[p.\,548]{SDT2}, give several examples for appropriate loss functions for the subset selection problem. For instance, we may define the utility of picking subset $\setS \subset \setM$ given the true parameter $\bm \theta=(\Sebm, \Spbm)$ and $\bm \vartheta = \min(\Sebm, \bm \Spbm+\Delta_0)$ (understood component-wise) as
\begin{align}\label{eq:other_utility} 
U_{\bm \theta}(\setS) = \max_{m \in S}\vartheta_m - \max_{m \in \setM}\vartheta_m - c|\setS|.
\end{align}
%could be employed, compare \citet[p.\,548]{SDT2}. 
In \eqref{eq:other_utility}, we balance out the best selected model performance $\max_{m \in S}\vartheta_m$ versus the subset size $S=|\setS|$. Note that the second term in \eqref{eq:other_utility} is independent of $\setS$. This trade-off is necessary as we could easily maximize $\max_{m \in S}\vartheta_m$ by just selecting all models $\setS=\setM$. The parameter $c$ thus has to be set by the user to guide this trade-off. This is also the case for most other popular utility functions proposed in the literature \cite{SDT2}. This will be problematic in practice, because there is (again) no clear-cut solution how to specify the 'hyperparameter' $c$.

To overcome this issue, we will propose a utility function which has no hyperparameter in the above sense which needs to be specified. Hereby, it is important to realize that we are really dealing with a two-stage selection problem. First, our goal is to select a suitable subset $\setS \subset \setM$ of models before the evaluation study. Secondly, from $\setS$ we aim to select a final model $m^* \in \setS$ with help of the evaluation study for implementation in practice. 
As our ultimate goal is to obtain a model or medical test with high performance, i.e. high sensitivity and specificity, we propose to optimize the expected final model performance
\begin{align}\label{eq:efp_def}
\EFP(\setS) = \EFP_{\bm \theta}(\setS) = \E_{\p(\hat{\bm \theta} \given \bm \theta)}  \vartheta_{m^*} &=  \sum_{m\in \setS} \vartheta_{m} \pr_{\p(\hat{\bm \theta} \given \bm \theta)}(m^* = m) \\&= \sum_{m\in \setS} \min(\Se_{m}, \Sp_{m} + \Delta_0) \pr_{\p(\hat{\bm \theta} \given \bm \theta)}(m^* = m)
\end{align}
depending on the subset of models $\setS \subset \setM$ selected for evaluation.  The final model $m^*=m^*(\hat{\bm \theta})$ is chosen based on the empirical performances in the evaluation study and thus a random variable, unlike the fixed parameter $\bm \theta=(\Sebm, \Spbm)$. Our default final model choice will be $m^*= \argmax_{m \in \setS} \min(\TSe_m, \TSp_m)$, compare section \ref{sec:conclusion}. The expectation in \eqref{eq:efp_def} is taken with respect to $\p(\hat{\bm \theta} \given \bm \theta)$, the distribution of $\bm \thetahat$ in the evaluation study with $n$ observations which depends on $\bm \theta$. Assuming the prediction models are deterministic, $\p$ is determined by $\D^{n}_{(\bm X, Y)}$, the distribution of the feature-label data $(\bm X, Y)$ as described in the beginning of section \ref{sec:estimation}. Of course, both distributions are unknown in practice.%, because $\bm \theta$ is not known. 

In formulation \eqref{eq:efp_def}, $\EFP$ depends on $\setS\subset\setM$ for which in principle there are $2^M$ choices. For the sake of simplicity we omit the choice $\setS=\emptyset$ in this work which corresponds to not conduct any evaluation study at all. In reality, this choice could of course be sensible, for instance if the empirical performances observed on the validation data suggest that none of the models satisfies our requirements $\theta_0=(\Se_0, \Sp_0)$. In order to simplify the decision problem which models to evaluate and its numerical optimization, we rank our models before choosing $\setS$ which leaves us with only $M$ choices instead of $2^M$, namely $\setS=\{1\}$, $\setS=\{1,2\}$, \ldots or $\setS=\setM$. Consequently, we may write $\EFP(S)$ instead of $\EFP(\setS)$ after ranking the models. 
How to rank models is not as obvious as in the single endpoint case concerned with the overall classification accuracy. Our default choice will be a ranking according to $\min(T^{\Se}_m(\setV), T^{\Sp}_m(\setV))$, whereby the test statistics are defined as in \eqref{eq:teststat_def} but based on the validation data. In other words, the models are ranked according to the validation evidence against the null hypotheses $H_m$.

The main idea behind our novel \textit{optimal\,EFP} selection rule is to maximize an estimate $\widehat{\EFP}(S)$ of $\EFP(S)$ before the evaluation study. Taking the viewpoint of Bayesian decision theory, we aim to maximize the posterior expected utility 
\begin{align}\label{eq:EFPhat}
\EFPhat(S,\ \pi({\bm \theta}\given \setV)) = \E_{\pi({\bm \theta}\given \setV)} \EFP_{\bm \theta}(S) =  \E_{\pi({\bm \theta}\given \setV)} \E_{p(\hat{\bm \theta} \given \bm \theta)}  \vartheta_{m^*}
\end{align}
at the time point of decision making, right before the evaluation study. The expectation is taken with regard to the posterior distribution $\pi=\pi({\bm \theta}\given \setV)$ of $\bm \theta$ given the validation data $\setV$.
More concretely, we assume that $\pi=(\pi^{\Se}, \pi^{\Sp})$ is composed of two multivariate Beta distributions, $\Sebm \sim \pi^{\Se}$ and $\bm \Spbm \sim \pi^{\Sp}$.
For that matter, we utilize a multivariate Beta (mBeta) distribution allowing arbitrary dependency structures between different Beta variables \cite{SIMPle}. As an initial prior distribution $\pi(\bm \theta)$ of $\bm \theta$, we use a vague (uniform) prior by default. A potential alternative would be to specify an informative prior distribution e.g. according to previous experimental data \citep{SIMPle}.

This approach allows us to sample $\bm \theta = (\Sebm, \Spbm)$ from $\pi$ and subsequently to sample estimates ${\bm \thetahat = (\Sehatbm, \Sphatbm)}$ from $\p=p(\bm \thetahat \given \bm \theta)$. In our case, the sampling distribution $\p$ of $\bm \thetahat$ is a multivariate Binomial distribution with mean vector $\bm \theta$ and correlation structure $\bm C_{\bm \thetahat}$. The correlation structure is also sampled from $\pi$ in terms of corresponding mixed second-order moments. Utilizing this Bayesian model allows us to forecast both, true parameter values and their respective estimates including the dependency structure of both. This enables a simulation of the selection process in the evaluation study and the inherent selection-induced bias problem. In our numerical implementation, we iteratively repeat this process until our computational budget has been exhausted and finally average the results to approximate the (double) expectation in \eqref{eq:EFPhat} for all $S$. Hereby, we specify a maximal number of models $S_{max}$ for evaluation in advance to reduce the computational burden of the optimization. A solution to the subset selection problem in dependence of the validation data $\setV$ is then simply $\setS = \{1,\ldots,S^*\}$ with
\begin{align}\label{eq:Sopt}
S^*=S^*(\setV) = \argmax_{1\leq S \leq S_{max} } \widehat{\EFP}(S, \pi({\bm \theta}\given \setV)). 
\end{align}
Given the assumed prior distribution, the initial model ranking, the statistical modeling assumptions and the numerical approximations, the procedure results in an approximate Bayes action as we approximately maximize the posterior expected utility $\EFP(S)$. The numerical implementation of the \textit{optimal\,EFP} selection rule is described in more detail in appendix \ref{app:optimal}.

Figure \ref{fig:algo_example} illustrates the trade-of between selecting too few and too many models for evaluation. The curves represent $\widehat{\EFP}(S)$ simulated by the above sketched algorithm for validation performance estimates $\bm \thetahat(\setV)$
based on $n_\setV=100$ observations taken from a single exemplary instance from our simulation database, compare section \ref{sec:mle_sim}.
The three different curves correspond to different evaluation sample sizes $n_\setE \in \{100, 200, 400\}$, compare section \ref{sec:mle_sim}. While too few models make it easy to miss a truly good candidate (because the validation ranking might be incorrect), too many models make it hard to identify a truly adequate solution. When more test observations are available, a higher number of models can be evaluated before the test set 'overfitting' occurs and $\EFPhat(S)$ therefore decreases slower in $S$. For the numerical experiments presented in this work (section \ref{sec:mle_sim}), we do not use \eqref{eq:Sopt} to determine $S^*$, as one would naturally do, but rather select the smallest $S$ with comparable $\widehat{\EFP}(S)$. By default, we define $S^*$ as the smallest $S$ such that $\widehat{\EFP}(S^*)$ is still within one standard error of $\max_{1 \leq S \leq S_{max}}\widehat{\EFP}(S)$. This is indicated in figure \ref{fig:algo_example}. For $n_\setE=200$ the maximizer of $\widehat{\EFP}(S)$ is $S=30$ which yields $\EFPhat(30)=82.51\%$. The simulation standard error associated to this estimate is $0.17\%$. If we select the smallest $S$ with comparable $\EFPhat(S)$ we obtain $S^*=15$ which gives $\EFPhat(15)=82.49\%$. Thus, the expected final performance is almost identical but choosing fewer models will naturally result in a smaller adjustment for multiplicity and estimation bias in the evaluation study. 

\begin{figure}
	\centering
	\includegraphics[width=0.8\linewidth]{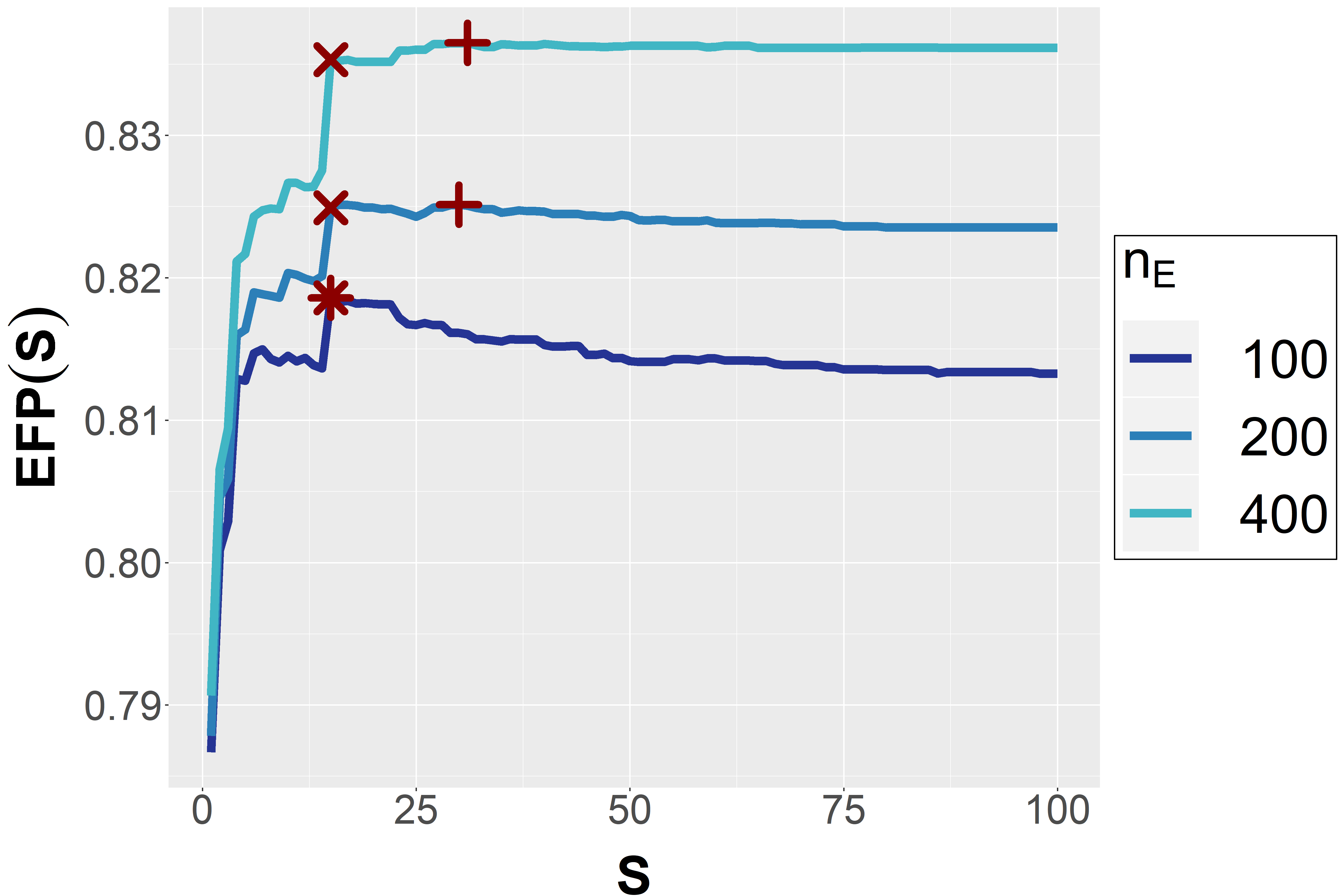}
	\caption{	
		Optimization of the expected final model performance $\EFP(S)$ as a function of the number of models to be evaluated. The figure shows $\EFPhat(S)$ after $500$ simulation runs for different evaluation sample sizes $n_\setE = 100, 200, 400$ for a single exemplary dataset from our simulation database. Besides the maximum ($\bm +$), the smallest $S$ with comparable $\EFPhat(S)$ (within one standard error of the maximum) is marked ($\bm \times$). For $n_\setE=100$, both points ($\bm +$, $\bm \times$) coincide.
	}
	\label{fig:algo_example}
\end{figure}

\subsection{Study conclusion}\label{sec:conclusion}

Our framework provides a multiple test decision for the hypotheses system \eqref{eq:hyp}. After the evaluation study, one might therefore partition the set of candidate models $\setM=\setP \cup \setN$ into positively and negatively evaluated models $\setP=\{m \in \setS\subset \setM: \varphi_m=1 \}$ and $\setN = \setS \setminus \setP$. In the case $\setP= \emptyset$, no performance claim can be made and the evaluation study has failed in a strict confirmatory sense. When $|\setP|=1$, only a single model meets the requirements and the situation is clear. However, if $|\setP|>1$, the investigator has the choice which of the models $m \in \setP$ should be implemented in practice.

When applying the maxT-approach, $m^*=\argmax_{m \in \setS} T_m$ is a natural final model choice. In the co-primary endpoint setting, this corresponds to $m^* = \argmax_{m \in \setS} \min(T^{\Se}_m, T^{\Sp}_m )$, i.e. we select the model with the highest evidence that both individual hypotheses are false. % 
It might however be reasonable to select the final model according to other criteria. 
For instance, one may declare the final model to be 
\begin{align}\label{maxW}
m^{*}=%\argmax_{m \in \setP} \widehat{\wAcc}_m =
\argmax_{m \in \setP} (w \cdot \Sehat_m + (1-w)\cdot \Sphat_m), \quad w \in (0,1).
\end{align}
This means we choose the final model such that a weighted average of empirical sensitivity and specificity is maximized, given the null hypothesis was rejected. When $w=0.5$ this corresponds to the maximization of the (empirical) balanced accuracy ($\bAcc$), or equivalently, the well known Youden-Index from all models $m\in\setP\subset \setS$ \cite{youden1950}. Further criteria which are not directly tied to the discriminatory performance such as model interpretability or calibration can of course also be considered.

%%%%%%%%%%%%%%%%%%%%%%%%%%%%%%%%%%%%%%%%%%%%%%%%%%%%%%%%%%%%%%%%%%%%%%%%%%%%%%%%%

\section{Numerical experiments}\label{sec:experiments}

\subsection{Machine learning and evaluation}\label{sec:mle_sim}

\subsubsection{Setup}\label{sec:mle_sim_setup}

The goal of this first simulation study is to assess the properties of different model selection rules and our statistical inference approach under realistic parameter settings in the machine learning context. To this end, we investigate different methods on a large simulation database which was generated for related methodological comparisons \cite{IMS}. While the overall binary classification accuracy was our only performance measure of interest in earlier work, we turn to the investigation of sensitivity and specificity as co-primary endpoints in the following. The simulation database consists of $144,000$ instances of the complete machine learning pipeline. A single instance consists of training ($\setT$), validation ($\setV$) and evaluation ($\setE$) datasets which are all comprised of feature-label observations $(\bm x, y)$ sampled from the same distribution $\D_{(\bm X, Y)}$.

Different binary classification models were trained on the learning data $\setL=\setT \cup \setV$, compare section \ref{sec:assumptions}.
For that matter, we employed four popular learning algorithms (elastic net, classification trees, support vector machines, extreme gradient boosting) with help of the \texttt{R} package \texttt{caret} \cite{APM, caret}. More concretely, we trained $M=200$ initial candidate models, 50 models obtained from randomly sampled hyperparameters per algorithm, on the training data $\setT \subset \setL$. The distinct validation data $\setV = \setL \setminus \setT$ is then used for model selection. The models $\setS \subset \setM= \{1,\ldots, M\}$ selected for the evaluation study are then re-trained on the complete learning data $\setL= \setT \cup \setV$ as this can be expected to slightly increase their predictive performance. The selected models $\setS \subset \setM$ then undergo a final assessment on the independent evaluation data $\setE$ which is supposed to mimic a diagnostic accuracy study. The goal of this final evaluation study is to estimate sensitivities and specificities of the selected models and to obtain a test decision regarding hypotheses system \eqref{eq:hyp_ext}. Ultimately, a final model $m^* \in \setS$ is selected to be implemented in practice, if positively evaluated, compare section \ref{sec:conclusion}.

For each individual simulation instance, the distinct datasets $\setL$ and $\setE$ are sampled from the same joint probability distributions $\D_{(\bm X, Y)}$. Different distributions $\D_{(\bm X, Y)}$ with varying characteristics have been specified to generate the entire simulation database. This covers linear and non-linear tasks (ratio 1:2), independent and dependent feature distributions (ratio 1:1) and different disease prevalences $\varrho\in\{ 0.15, 0.3, 0.5\}$ (ratio 1:1:1). The learning data $\setL$ is of size $n_\setL \in \{400, 800\}$ (ratio 1:1). The model selection was always conducted on a hold-out validation dataset $\setV \subset \setL$ of size $n_\setV = n_\setL/4$. 
Each learning dataset (including the trained models) was used twice, once in connection with each of the considered evaluation dataset sizes $n_\setE \in \{400, 800\}$ (ratio 1:1).

The true model performances $\theta =(\Se, \Sp)$ of all models are approximated with high precision on a large population dataset ($n_\setP=100,000$). This dataset was not used for any other purposes. The truly best model is defined in line with the study goal outlined in section \ref{sec: goals} as
\begin{align}\label{eq:trulybest}
m^\op = \argmax_{m \in \setM} \vartheta_m = \argmax_{m \in \setM} \min(\Se_m, \Sp_m + \Delta_0).
\end{align}
The parameter $\Delta_0$ is set to $0$ for all experiments which expresses equal importance of sensitivity and specificity. The corresponding maximal performance is denoted as $\vartheta_\op = \vartheta_{m^\op}$. 
The final model $m^*\in \setS$ is chosen based on the evaluation data as $m^*=\argmax_{m\in\setS} T_m$, compare section \ref{sec:conclusion}. 
In case of a tie, $m^*$ is chosen randomly from the set $\argmax_{m\in\setS} T_m$. 
For a throughout description of the simulation database we refer to \citet{IMS}. \texttt{R} packages and scripts related to the simulation study are publicly accessible.\footnote{\url{https://maxwestphal.github.io/SEPM.PUB/} (accessed March 22, 2020)} 

We initially divided the simulation database into two parts via a stratified random split with regard to the above mentioned factors. In effect, all mentioned ratios are identical in both parts. The smaller subset with $24,000$ instances was used in advance to perform a few methodological comparisons with the goal to fine-tune certain aspects of the \textit{optimal\,EFP} selection rule described in section \ref{sec:planning}. For instance, we altered the way models are ranked initially and several details associated to numerical complexity of the algorithm (number of iterations, convergence criterion), compare appendix \ref{app:optimal}. The final algorithm was then fixed and investigated on the main part which consists of $120,000$ instances, as described in the following.

\subsubsection{Goals}\label{sec:mle_sim_goal}

Our overall goal is an assessment of important operating characteristics of the employed evaluation strategies. An evaluation strategy is comprised of a selection rule (before evaluation) in and a statistical testing procedure (for evaluation). The latter will always be conducted via the maxT-approach presented in section \ref{sec:inference}. When only a single model is selected, the maxT-approach reduces to the standard procedure for co-primary endpoints, namely two independent (unadjusted) $Z$-tests for sensitivity and specificity which both need to reject in order claim superiority.

The primary objective of this simulation study is therefore the comparison of different selection rules. Hereby, we are mainly interested in the performance of our novel Bayesian approach designed to optimize the expected final model performance (EFP), compare section \ref{sec:planning}. This approach is referred to as \textit{optimal\,EFP} selection rule. Its main competitor is the \textit{within\,1\,SE} rule for which $\setS$ is defined as all models $m\in \setM$ with preliminary balanced accuracy $\widehat{\bAcc}_m(\setV) = (\Sehat(\setV)+\Sphat(\setV))/2$ (estimated on $\setV$) not more than one standard error below the maximal $\widehat{\bAcc}_m(\setV)$. A similar rule has lead to vast improvements relative to the traditional \textit{default} approach in previous work concerned with the overall accuracy \cite{EOMPM1, IMS}. The \textit{default} rule selects the single best model in terms of $\widehat{\bAcc}(\setV)$ for evaluation and will serve as a secondary comparator. Additionally, we consider the \textit{oracle} selection rule which cannot be implemented in practice but gives an insight on the theoretically achievable performance (of selection rules). The \textit{oracle} rule always selects the truly best model $\setS= \{ m^\op\}$ as defined in \eqref{eq:trulybest}. To reduce the computational complexity of the simulation study we thresholded the number of evaluated models to $S_{max}=\lfloor\sqrt{n_\setE}\rceil$. Performance ties are resolved by including all tied models for evaluation. In effect, the \textit{default} and \textit{oracle} rule also occasionally select more than one model for evaluation.

Our main research hypothesis going into the simulation is that the \textit{optimal\,EFP} improves the expected final performance (EFP) and statistical power compared to the \textit{within\,1\,SE} rule while having comparable estimation bias. Moreover, we require that the FWER is still controlled at the desired level which is set to $\alpha=2.5\%$ (one-sided) in all scenarios.

\subsubsection{Results}\label{sec:mle_sim_results}

\begin{figure}[ht!]
	\captionsetup[subfigure]{justification=centering}
	\begin{subfigure}[c]{0.5\textwidth}
		\includegraphics[width=\textwidth]{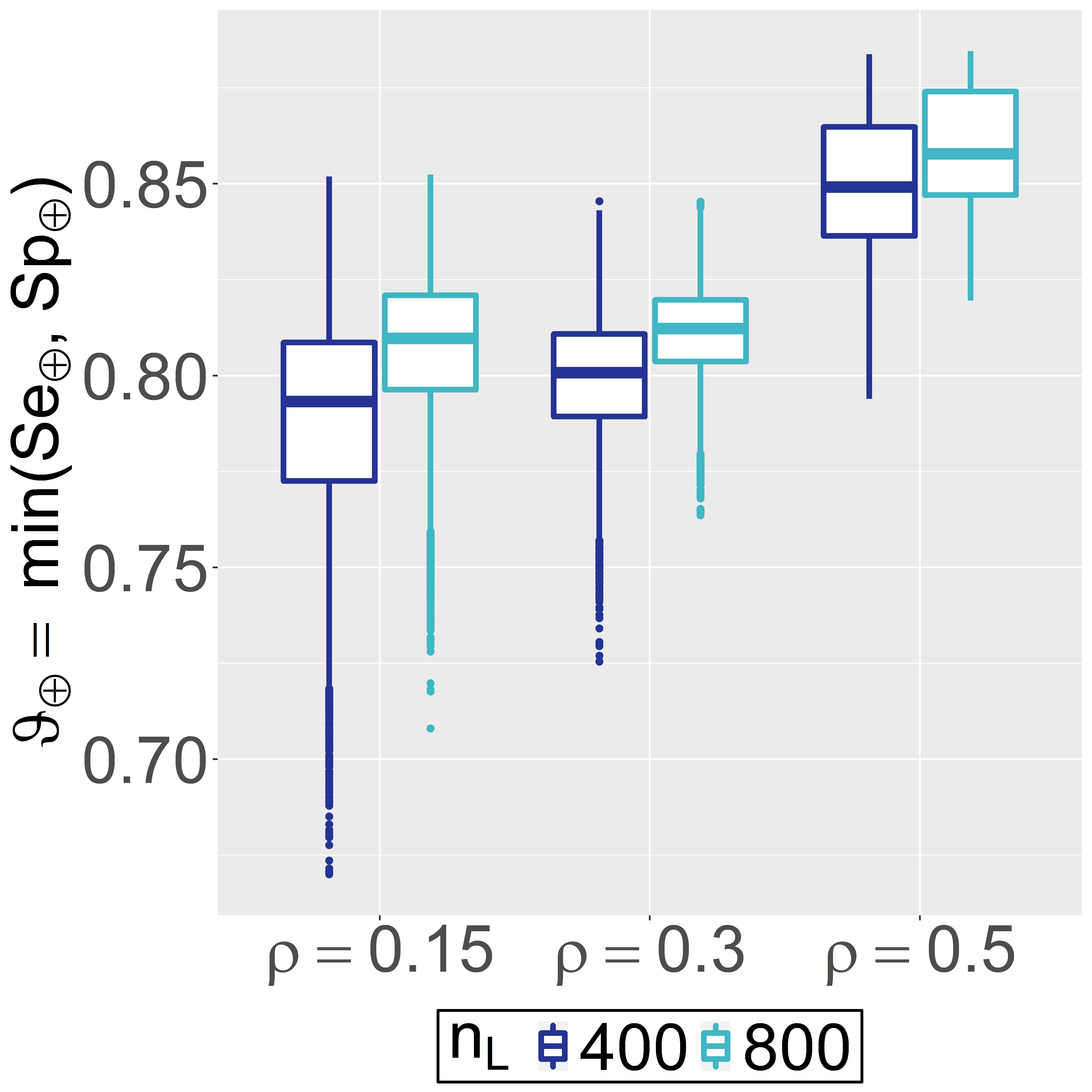}
	\end{subfigure}
	\begin{subfigure}[c]{0.5\textwidth}
		\includegraphics[width=\textwidth]{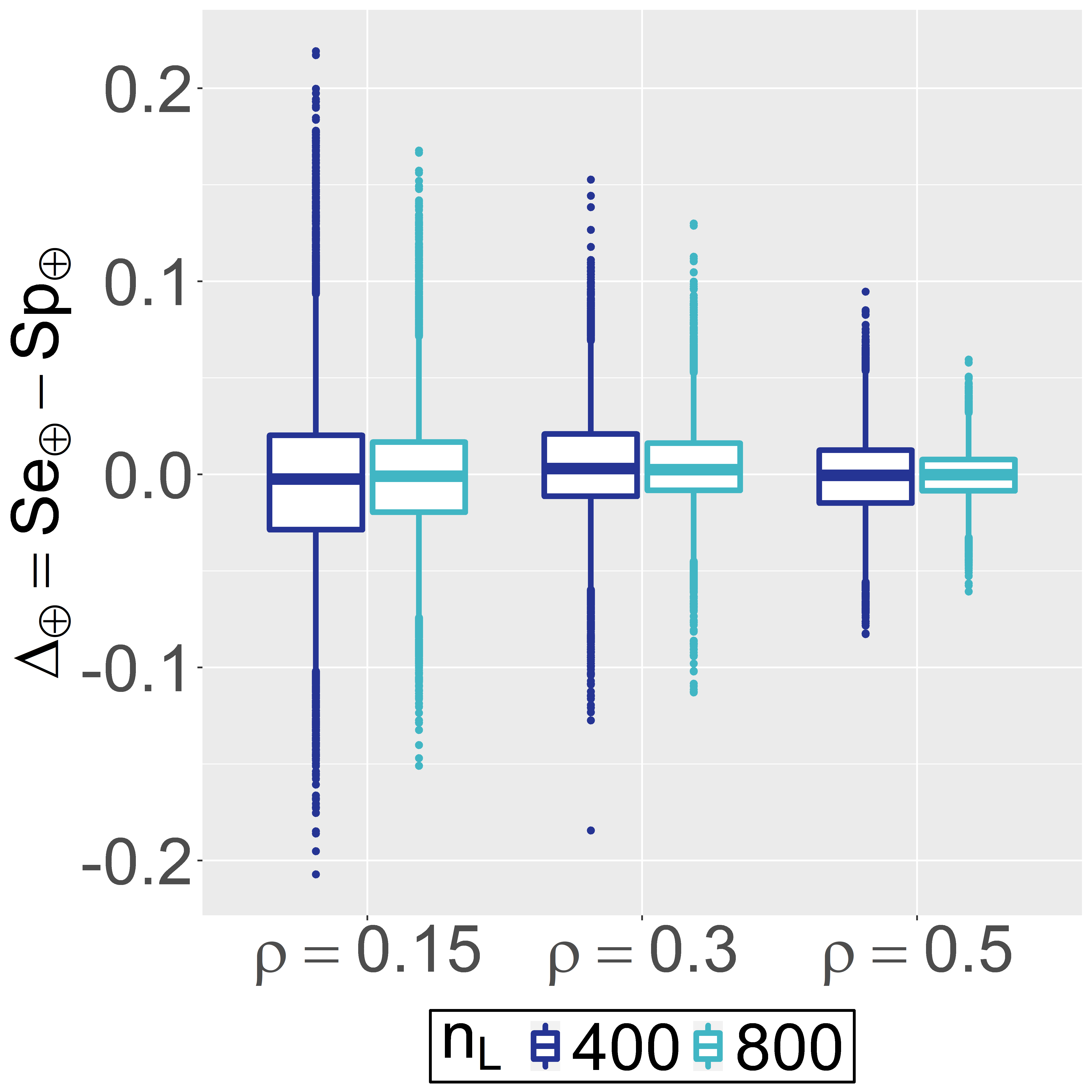}
	\end{subfigure}
	\caption{	
		Distribution of  $\vartheta_{\op}$ and $\Delta_{\op}$ stratified for disease prevalence $\varrho$ and learning sample size $n_\setL$. Each boxplot is based on 10,000 distinct points which amounts to 60,000 unique simulation instances in total.
	}
	\label{fig:mle_sim_eda}
\end{figure}

Firstly, we investigate how well the feature-label relationships can be learned by the considered algorithms. 
Figure \ref{fig:mle_sim_eda} (left) shows the distribution of the truly best model performance $\vartheta_{\op}$ over all unique $60,000$ learning instances. The results are stratified for disease prevalence $\varrho$ and learning sample size $n_\setL$. We observe that $\vartheta_{\op}$ is increasing in both these factors, as expected. Overall, the median of $\vartheta_{\op}$ lies between $79\%$ and $86\%$ which we consider to be a very realistic range. Moreover, figure \ref{fig:mle_sim_eda} (right) shows the distribution of $\Delta_{\op} = \Se_{\op}- \Sp_{\op}$ which indicates how balanced the accuracy of the best model is across the diseased and healthy subpopulations. Note that for the \textit{oracle} rule, the model selection is $\setS = \{m^\op\}$ and thus $m^* = m^\op$ is also the final model.

\begin{figure}[t!]
	\includegraphics[width=\linewidth]{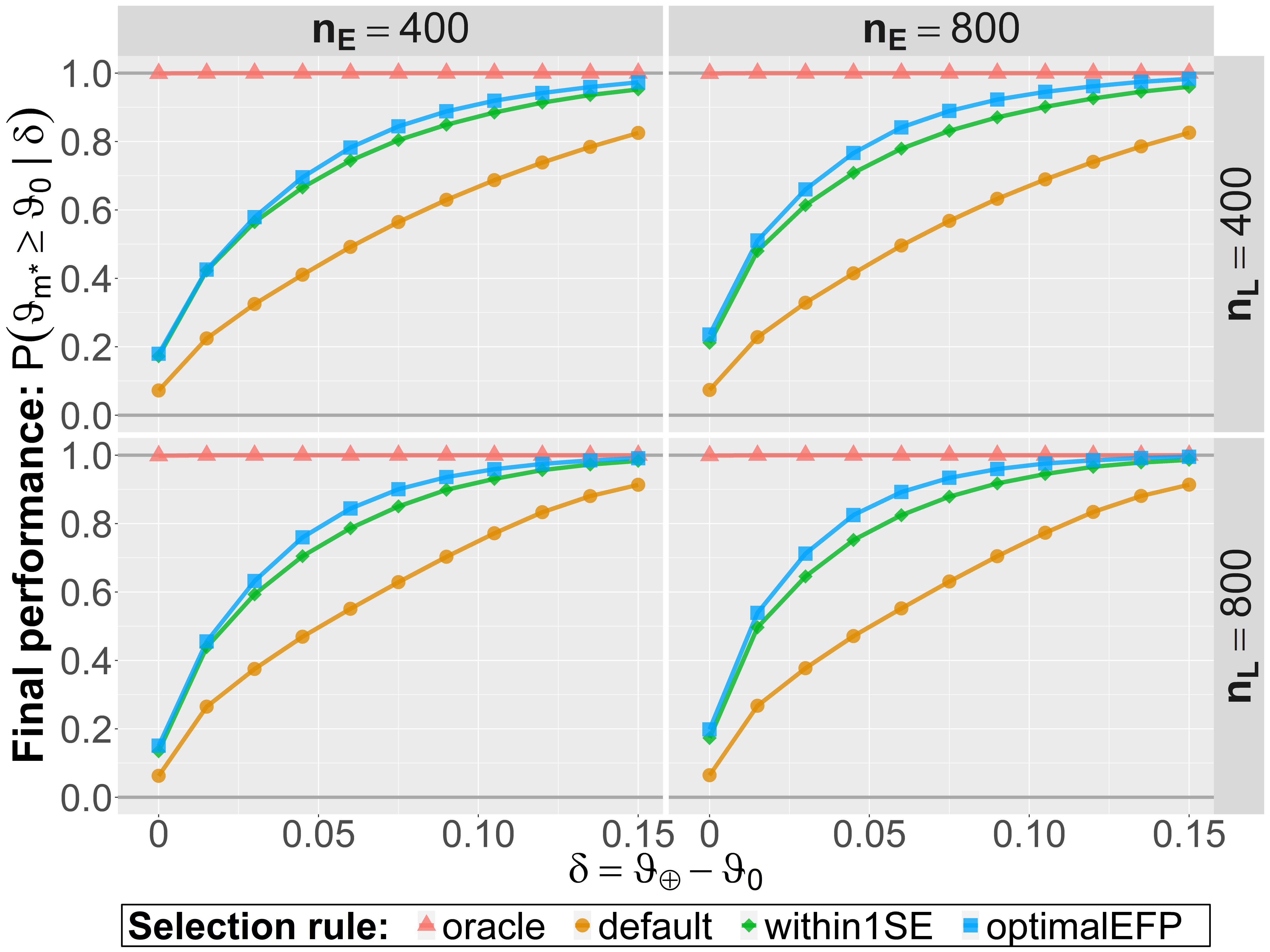}
	\caption{Probability of the event that the final model performance $\vartheta_{m^*}=\min(\Se_{m^*}, \Sp_{m^*}+\Delta_0)$ exceeds a given threshold $\vartheta_0 = \vartheta_{\op} - \delta$. $\vartheta_{\op}$ is the true performance of the best candidate model. Each scenario of learning and evaluation sample sizes $(n_\setL, n_\setE)$ is based on $30000$ repetitions of the complete machine learning pipeline. 
	}
	\label{fig:mle_sim_fp}
\end{figure}

Figure \ref{fig:mle_sim_fp} illustrates the true performance of the final selected model $m^* \in \setS$ whereby $\setS$ depends on the employed selection rule.  
Depending on the threshold value $\vartheta_{0} = \Se_{0} = \Sp_{0}$, figure \ref{fig:mle_sim_fp} shows the (simulated) probability ${\pr(\vartheta_{m^*} \geq \vartheta_0\given \delta)}$ that the final chosen model has at least a performance of $\vartheta_0 = \vartheta_\op - \delta$, $\delta>0$. The general picture is that the \textit{optimal\,EFP} and \textit{within\,1\,SE} rules both clearly outperform the \textit{default} rule. 
Concerning our main comparison, we observe that the \textit{optimal\,EFP} approach uniformly outperforms the \textit{within\,1\,SE} rule. The effect regarding ${\pr(\vartheta_{m^*} \geq \vartheta_0\given \delta)}$ is rather small, e.g. $0.791-0.735 =0.056$ (99\% CI: 0.053-0.059) at $\delta=0.05$ when averaging over all scenarios ($n_\setL,n_\setE$).
We also compared the mean expected performance $\E\vartheta_{m^*}$ of both rules, which are $0.788$ and $0.782$, respectively. This comparison thus also shows a slight advantage of the \textit{optimal\,EFP} rule of an average performance gain of $0.007$ (99\% CI: 0.006-0.007). Further details are provided in table \ref{tab:mle_sim_table} in appendix \ref{app:experiments}.

\begin{figure}[t!]
	\includegraphics[width=\linewidth]{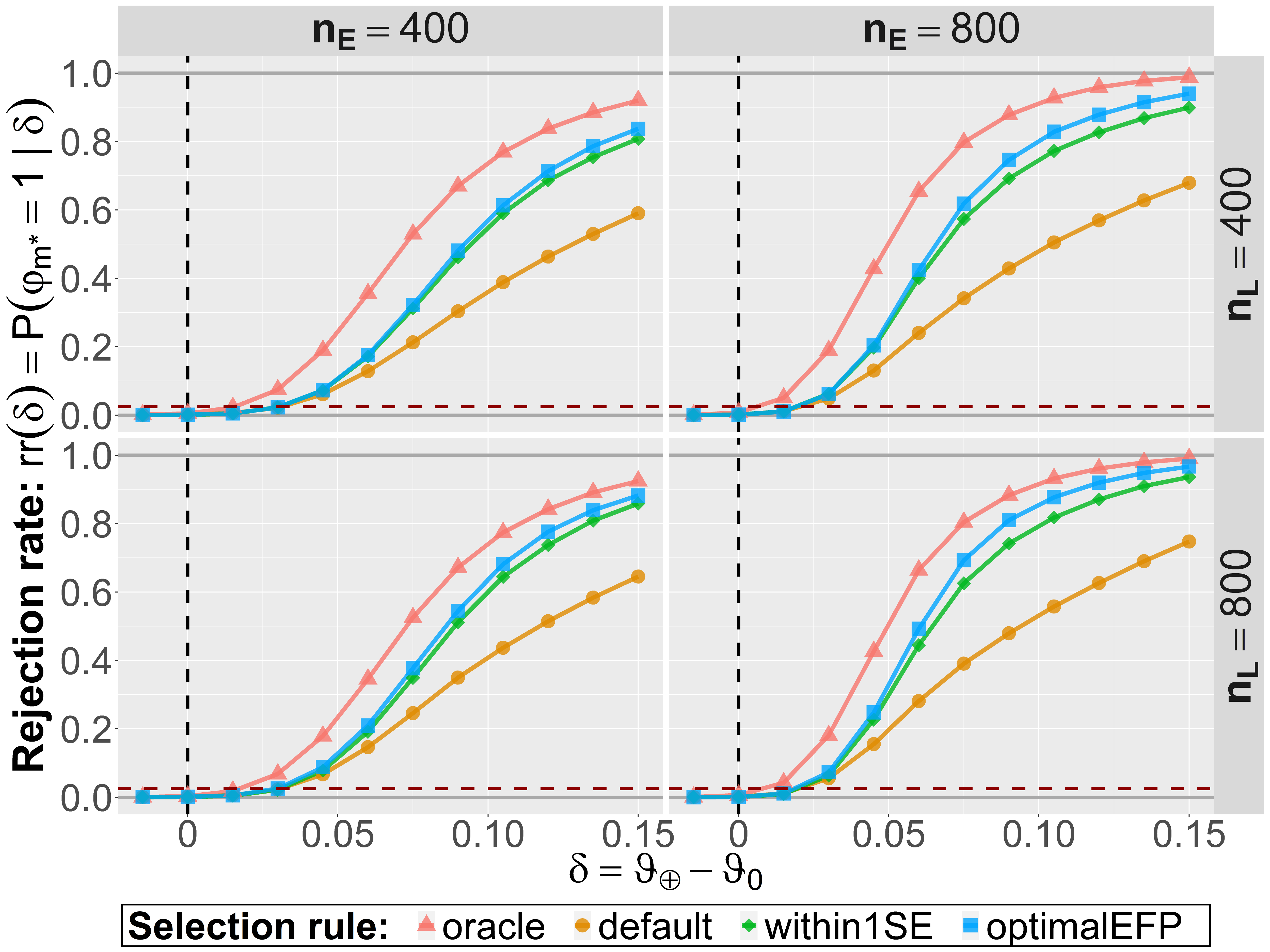}
	\caption{Rejection rate for the global null hypothesis (all prediction models have performance $\vartheta_m \leq \vartheta_0$ ) in the evaluation study after models have been selected for evaluation by specified selection rule. $\delta=\vartheta_{\op}-\vartheta_0$ specifies if the global null is true ($\delta \leq 0$) or false ($\delta > 0$). Each scenario of learning and evaluation sample sizes $(n_\setL, n_\setE)$ is based on $30000$ repetitions of the complete machine learning pipeline. The significance level is $\alpha=2.5\%$ (one-sided, dashed horizontal line).
	}
	\label{fig:mle_sim_rr}
\end{figure}

Figure \ref{fig:mle_sim_rr} shows the rejection rate $rr(\delta) = \pr(\varphi_{m^*}=1 \given \delta)$ as a function of $\delta=\vartheta_\op - \vartheta_0$. %Hereby, $\vartheta^+ = \vartheta_{m^+}$ as defined in \eqref{eq:best_model_def}.
The case $\delta=0$ or $\vartheta_0 = \vartheta_\op$ corresponds to the smallest threshold value $\vartheta_0$ such that the global null hypothesis (none of the initially trained models has a high enough performance) is true, compare \eqref{eq:hyp_ext}. Thus, for $\delta \leq 0$ the global null is true and the rejection rate $rr(\delta)$ should be bounded by $\alpha=2.5\%$ for all selection rules. Note that this parametrization via $\delta$ is necessary as $\vartheta_\op$ is not fixed over all simulation instances as illustrated in figure \ref{fig:mle_sim_eda}. The results clearly show that FWER control is given for all rules in all scenarios. In appendix \ref{app:experiments}, table \ref{tab:mle_sim_table}, we also show the FWER when the (pre-)selection process is not incorporated in the test decision. More precisely, we are then looking at the type I error rate of the final statistical test for hypothesis system \eqref{eq:hyp} which is equal to $\pr(\varphi_{m^*}=1 \given \vartheta_0 = \vartheta_{m^*})$. We refer to this quantity as conditional FWER which is necessarily larger than $rr(0)$ but still controlled, i.e. smaller than $\alpha$ in all cases. Note that, strictly speaking, it is only an upper bound of the proper conditional FWER $\pr(\varphi_{m^*}=1 \given \vartheta_0 = \max_{m\in \setS}\vartheta_m)$ which was not 'recorded' in our numerical experiments.

In the situation $\delta>0$, the threshold value $\vartheta_0$ is small enough such that the global null is no longer true, i.e. there is at least one model $m \in \setM$ with true performance $\vartheta_m > \vartheta_0$ which we seek to detect. Larger rejection rates are thus positive as they correspond to statistical power. Concerning power, the ranking of selection rules is the same as in the expected final performance comparison (figure \ref{fig:mle_sim_fp}). The multiple testing approaches \textit{optimal\,EFP} and \textit{within\,1\,SE} again cleary outperform the \textit{default} rule. The power increase is up to $30\%$, depending on the scenario and $\delta$. The power difference between \textit{optimal\,EFP} and \textit{within\,1\,SE} rule is always positive but only noteworthy in the case $n_\setE=800$.

Table \ref{tab:mle_sim_table} in appendix \ref{app:experiments} shows further comparisons of the investigated selection rules. For instance, the properties of the corrected point estimators $\tilde{\theta}_{m^*} = (\widetilde{\Se}_{m^*}, \widetilde{\Sp}_{m^*})$ are assessed, compare section \ref{sec:inference}. When only a single model is evaluated, $\tilde{\theta}$ corresponds to $\thetahat$, the raw (slightly regularized) point estimate, compare section \ref{sec:estimation}. For example, the combined mean absolute deviation 
\begin{align}
\MAE_2 = \E (|\widetilde{\Se}_{m^*}-\Se_{m^*} |+ |\widetilde{\Sp}_{m^*}-\Sp_{m^*}|)/2
\end{align}
is decreased when multiple models are evaluated simultaneously (\textit{optimal\,EFP}: 0.042; \textit{within\,1\,SE}: 0.049, averaged over both cases $n_\setE \in \{400, 800\}$) compared to the \textit{default rule} rule (0.074). This might seem counterintuitive at first, but is in line with previous findings \citep{IMS}. This observation can be explained with the on average increased final model performance $\vartheta_{m^*}=\min(\Se_{m^*}, \Sp_{m^*})$ which results in a decreased variance of the binomially distributed estimators. In contrast, the estimation bias is more pronounced when multiple models are evaluated. Recall that we have $\pr(\widetilde{\vartheta}_{m^*} > \vartheta_{m^*})\leq 0.5$ asymptotically by construction as we use the corrected estimator $\widetilde{\vartheta}_{m^*} = \min(\widetilde{\Se}_{m^*}, \widetilde{\Sp}_{m^*})$, compare section \ref{sec:inference}. Finally, we note that the \textit{optimal\,EFP} rule on average selects roughly two to three fewer models for evaluation compared to the \textit{within\,1\,SE} rule, compare table \ref{tab:mle_sim_table} in appendix \ref{app:experiments}.

\subsection{Least favorable parameter configurations}\label{sec:lfc_sim}

In the previous section, the operating characteristics of our simultaneous inference framework were assessed in a variety of realistic scenarios in the predictive modeling context. In contrast, we investigate the worst case type I error rate of the maxT-approach for different sample sizes in the following. For that matter, we simulate synthetic data under least favorable parameter configurations (LFCs) as described in section \ref{sec:inference} and also parameters close to the LFC.

We assume that $S \in \{1, 10, 20\}$ models are assessed on the evaluation data and consider the case when $\Se_0$ and $\Sp_0$ are identical. For each simulation run, we randomly draw a binary vector $\bb$ such that $|\bb|=S/2$. When $S=1$, we set $\bb = (1)$. The true parameters $\Sebm$ and $\Spbm$ are defined as
\begin{align}
\Se^{\bm b, \epsilon}_m &= b_m  (\Se_0 - (m-1) \epsilon) + (1 - b_m), \\
\Sp^{\bm b, \epsilon}_m &= b_m + (1 - b_m) (\Sp_0 - (S-m) \epsilon).
\end{align}
Hereby, the parameter $\epsilon \in \{0, 0.001, 0.002\}$ expresses how close we are to a true LFC which corresponds to $\epsilon =0$. For instance, for $S=10$, $\Se_0=\Sp_0=0.8$, $\epsilon=0.001$, a possible (nearly least favorable) parameter configuration is given by
\begin{alignat}{12}\label{eq:lfc_example}
\bm b                     &=\ &&(1     &&, 1     &&, 0     &&, 1     &&, 0     &&, 0     &&, 0     &&, 1     &&, 1     &&, 0    &&), \\
\Sebm^{\bm b, \epsilon}_m &=\ &&(0.800 &&, 0.799 &&, 1     &&, 0.797 &&, 1     &&, 1     &&, 1     &&, 0.793 &&, 0.792 &&, 1    &&), \\
\Spbm^{\bm b, \epsilon}_m &=\ &&(1     &&, 1     &&, 0.793 &&, 1     &&, 0.795 &&, 0.796 &&, 0.797 &&, 1     &&, 1     &&, 0.800&&).
\end{alignat}%{12}%
If $\epsilon$ were $0$ instead of $0.001$ in \eqref{eq:lfc_example}, all elements in $\Sebm^{\bm b, \epsilon}$ and $\Spbm^{\bm b, \epsilon}$ different from $1$ would change to ${\Se_0=\Sp_0=0.8}$ (assuming the same $\bb$). We mainly consider a (fixed) disease prevalence of $\varrho=0.2$. Hence, there are $n_1 = \varrho n$ diseased and $n_0 = n-n_1$ subjects, whereby $n \in \{200, 400, 800, 4000, 20000\}$ in our simulation. Additionally, the scenarios with $\epsilon=0$ were also repeated with a balanced class distribution ($\varrho=0.5$). In all simulations, the true correlation between all estimators which belong to non-trivial variables (true mean not equal to one), e.g. $\Sehat_1$ and $\Sehat_2$ in the example \eqref{eq:lfc_example}, is set to 0.5. Results from a sensitivity analysis regarding this specification are reported at the end of this section. Any empirical sensitivity is of course independent from any empirical specificity. Data generation, i.e. sampling of the similarity matrices $\bm Q^{\Se}$ and $\bm Q^{\Sp}$ (section \ref{sec:estimation}) based on the above specifications, was conducted with help of the \texttt{bindata} package in \texttt{R} \cite{leisch1998, bindata}. The number of simulation runs was set to $N_{sim} = 10,000$ which results in a (point-wise) upper bound for the standard error of the simulated FWER of $0.5/\sqrt{10.000}=0.005$.

\begin{figure}[t!]
	\includegraphics[width=\linewidth]{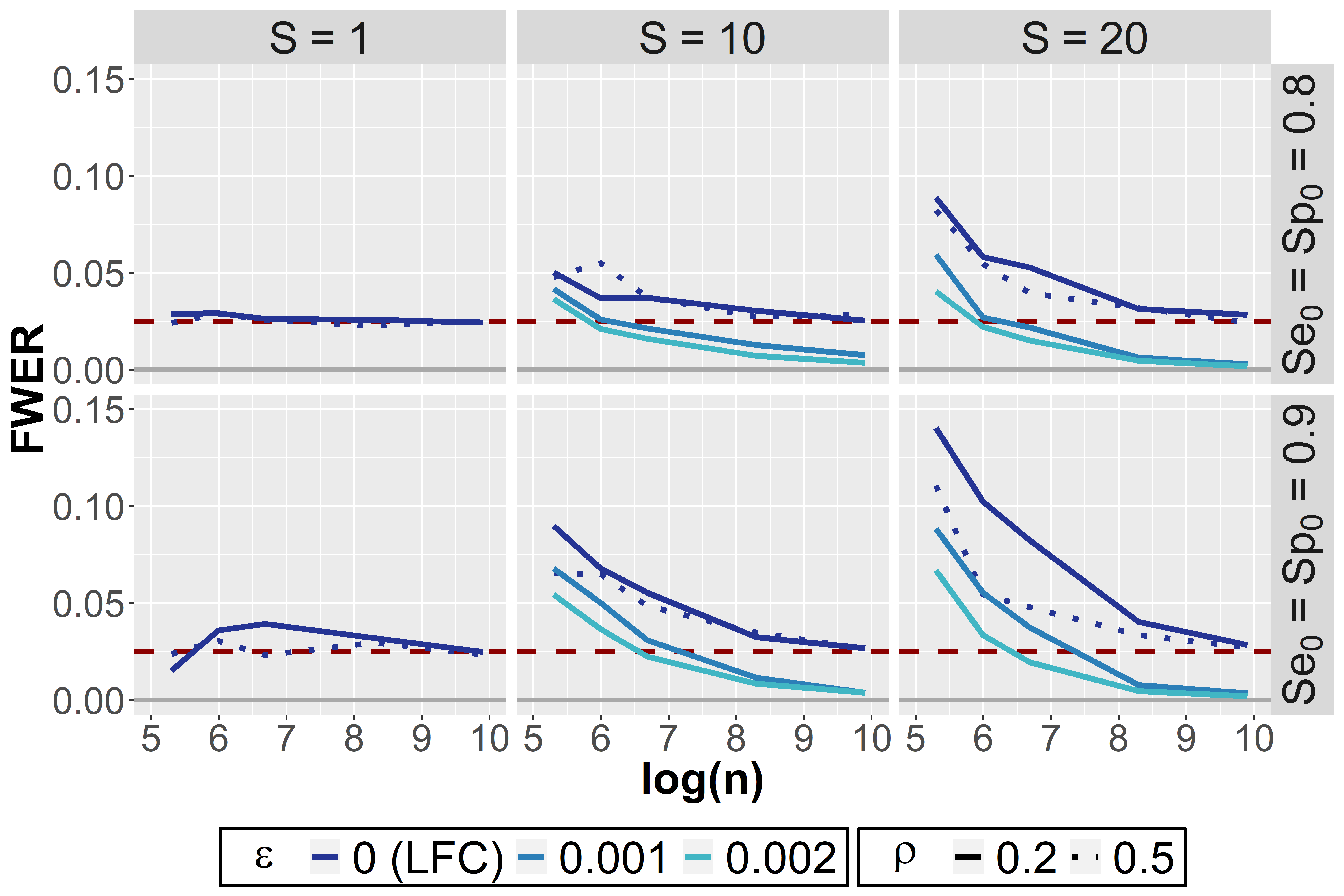}
	\caption{Family-wise error rate (FWER) in dependence of the total sample size $n=n_1+n_0$ (logarithmic scale) under different parameter configuration (specified by $\vartheta_0, \epsilon$) and disease prevalences $\varrho=n_1/n$. In all cases, $S/2$ (randomly selected) models have $\Se_m=\Se_0$ and $\Sp_m=1$ and vice versa for the remaining $S/2$ models. For $S=1$, $\Se = \Se_0$ and $\Sp=1$. The target significance level is $\alpha=2.5\%$ (one-sided). $N_{sim} = 10,000$. 
	}
	\label{fig:lfc_sim}
\end{figure}

Figure \ref{fig:lfc_sim} shows the simulation results stratified by all relevant factors. In all scenarios, the results match the theoretical findings from section \ref{sec:inference} in that the asymptotic FWER under LFCs converges to $\alpha=0.025$ as a function of the sample size $n$.
When the disease prevalence is increased from $\varrho=0.2$ to $0.5$, the FWER is (slightly) decreased. In comparison, the number of models $S$ and the parameter $\epsilon$ have a stronger influence on the FWER. In the most extreme case, the FWER is close to $14\%$ for $S=20$ and $\Se_0=\Sp_0=0.9$ for $n=200$ and reaches its target $\alpha=0.025$ not before $n=20000$. Remember that there are only $n_1 = \varrho n$ diseased subjects, so the 'relevant' sample size is much smaller, particularly in the case $\varrho=0.2$. When we relax the least favorable parameter configurations ($\epsilon=0$) to only slightly worse model performances ($\epsilon=0.001$) such as in \eqref{eq:lfc_example}, the FWER declines much faster in $n$. For $\Se_0 = \Sp_0 = 0.8$ the target significance level is met already at around $n=400 \approx \exp(6)$ total or $n_1=80$ diseased observations. For $\Se_0 = \Sp_0 = 0.9$, this number more than doubles to roughly $n=1100\approx\exp(7)$ for $S=10$ and $n=1800\approx\exp(7.5)$ for $S=20$.

In an ancillary simulation ($N_{sim}=5000$), we also investigated results under different true correlation structures of the estimators $\Sehat, \Sphat$ (independence, equicorrelation, autocorrelation) and strengths. None of these changes resulted in qualitatively different results than those shown in figure \ref{fig:lfc_sim}. Additionally, we also investigated the cases $\Se_0=0.8$ and $\Sp_0=0.9$ and vice versa in the LFC scenario ($N_{sim}=5000$). Hereby, it became apparent that the FWER development depends almost solely on the sensitivity value(s), at least in the low prevalence case ($\varrho=0.2$). This can be explained due to the lower number of diseased subjects, and hence lower 'relevant' sample size. When $\varrho=0.5$, the problem becomes symmetric and the FWER curves for $\theta_0=(0.8,0.9)$ and $\theta_0=(0.9,0.8)$ are thus not distinguishable and lie in between those shown in figure \ref{fig:lfc_sim}, i.e. between the cases  $\theta_0=(0.8,0.8)$ and  $\theta_0=(0.9,0.9)$. 

The worst-case projection of parameter values to one, e.g. of $\Sp_1, \Sp_2, \Se_3, \ldots$ in \eqref{eq:lfc_example}, is not really realistic. For instance, in the case $\Se_m=0.9$, $\Sp_m=1$ and $\varrho=0.2$, the overall accuracy is $\Acc_m = 0.98$. In real-world problems, such a high accuracy is rarely attainable. We therefore also considered an alternative data-generating scenario where \eqref{eq:lfc_example} is modified such that the overall accuracy for each model $m$ is bounded by $0.95$. This change decreased the FWER across almost all scenarios by roughly $0.5$ to 2 percent. Otherwise, the results from this sensitivity analysis do not deviate noticeably from those shown in figure \ref{fig:lfc_sim}.

\section{Discussion}\label{sec:discussion}

\subsection{Summary}\label{sec:summary}

In this work we have investigated statistical and study design related methods to improve diagnostic accuracy studies with co-primary endpoints sensitivity and specificity. The main idea is to allow that multiple candidate models are assessed simultaneously on the evaluation data, thereby increasing the likelihood of identifying one that performs sufficiently well. In modern medical testing applications this is necessary because often hundreds or thousands of model architectures and preprocessing pipelines are compared on preliminary data, making a final decision before the evaluation study difficult and prone to error. Our approach allows to wait for the test data until a final decision is made and thereby possibly correcting an erroneous model ranking. 

The main advantages of the multiple testing approach are the on average increased final model performance and the vastly improved statistical power. As multiple comparisons are now conducted on the final dataset, inferences need to be adjusted via a multiple testing procedure. For that matter, we employed a variation of the so-called maxT-approach which is based on a multivariate normal approximation \cite{hothorn2008}. This framework also enables a corrected, conservative (point) estimation but the unbiasedness of the \textit{default} single model evaluation strategy is lost. An important question of practical interest is how models should be selected for evaluation. Besides the heuristic \textit{within\,1\,SE} rule, we also employed the novel \textit{optimal\,EFP} rule. This Bayesian approach aims to optimize the expected final model performance (EFP) before the evaluation study. Given the model assumptions and involved numerical approximations, it provides is with an approximate Bayes action for the considered subset selection problem as it optimizes the posterior expected utility, i.e. the EFP. The underlying model for that matter is based on the hold-out validation data from the model development phase which is readily available before the evaluation study. The comparison regarding all important criteria (performance, power, estimation bias) turns out in favor for the more elaborate \textit{optimal\,EFP} approach. However, the advantage relative to the \textit{within\,1\,SE} rule was small to moderate in most scenarios. For instance, the average performance increase in terms of $\vartheta = \min(\Se, \Sp)$ was around $0.7\%$ in the main simulation study (section \ref{sec:mle_sim}). 
We conclude that the \textit{optimal\,EFP} rule yields a vast improvement over the \textit{default} approach but the conceptually and implementation wise simpler \textit{within\,1\,SE} rule yields almost as good of results. 

As was laid out in section \ref{sec:planning}, an advantage of the \textit{optimal\,EFP} selection rule is that the it does not involve any tuning parameter, in contrast to other potential decision theoretic approaches and the \textit{within\,$k$\,SE} rule ($k>0$). Another major advantage is that the \textit{optimal\,EFP} rule takes into account the evaluation sample size, i.e. it selects more models when more data is (expected to be) available the evaluation study. This is not the case for the simpler \textit{within\,$1$\,SE} rule which only takes into account estimation uncertainty in the validation stage, before the evaluation study. Moreover, we can adapt key characteristics if we know or expect that the learning data is not representative for the evaluation study ahead. This is for instance the case if our preliminary data is from a case-control study with non-representative class balance. As the disease prevalence is a dedicated parameter of our \textit{optimal\,EFP} algorithm (appendix \ref{app:optimal}), we can easily change it to another fixed value or specify a prior distribution. We have not investigated this possible adaptivity in simulations so far. A disadvantage of the \textit{optimal\,EFP} rule is the increased mathematical and numerical complexity.

\subsection{Limitations}\label{sec:limitations}

Our simultaneous test procedure guarantees asymptotic strong control of the Family-wise error rate. As illustrated in section \ref{sec:lfc_sim}, the finite sample performance under least favorable parameter settings can be unsatisfactory, depending on the exact scenario. As expected, the FWER depends primarily on the threshold values $\Se_0$ and $\Sp_0$ because the employed normal approximation performs worse near the boundary of the unit interval. Control of the FWER is also worse when the number of models $S$ is large or when the disease prevalence is low.

We could aim to improve this limitation, e.g. by not utilizing the correlation structure between models which would result in a larger critical value and thus stricter inference. However, as indicated in the main simulation (section \ref{sec:mle_sim}), the average FWER under realistic conditions is far below the significance level, compare figure \ref{fig:mle_sim_fp} and table \ref{tab:mle_sim_table}. Note that in contrast to the simulation study presented in section \ref{sec:lfc_sim}, the analysis in section \ref{sec:mle_sim} is an average risk assessment. The average is taken with respect to different generative distributions over the relevant parameter values, i.e. model performances $\bm \theta =(\Sebm, \Spbm)$ and their dependency structure. These generative distributions are not known but rather implicit in the sense that the relevant parameters dependent on the (feature-label) data generating distribution(s) $\D_{\bm X, Y}$ (which are known) and properties of the learning algorithms. Their resulting characteristics are partially described in figure \ref{fig:mle_sim_eda} and deemed to be realistic for real-world prediction tasks. Note that the results from section \ref{sec:mle_sim} can be generalized to other prediction tasks and learning algorithms which result in similar generative distribution over parameters $\bm \theta$ (and dependency structure thereof).

Together, the results from both simulation studies, realistic scenarios in section \ref{sec:mle_sim} and the worst-case assessment in section \ref{sec:lfc_sim}, indicate that parameter configurations in reality are rarely least favorable in the investigated context. An associated observation was made at the end of section \ref{sec:lfc_sim}. When the LFC is made only slightly less unfavorable, the FWER declines faster in $n$. In the machine learning context, it would be reasonable to assume that the larger the number of models $S$, the less likely least favorable parameters become. After all, the LFC corresponds to the setting that from $M$ initial candidate models, $S$ have the exact same performance at the boundary of the null hypothesis and are all selected for evaluation. While this reasoning does apply to the practitioner, it does not apply from a purely frequentist viewpoint as probability statements over true parameters are irrelevant when (worst-case) FWER control in the strong sense is the goal.

While the multiple testing approaches certainly increase the flexibility in the evaluation study, prediction models still need to be fixed entirely before the study. Verifying this could pose challenges in a regulatory context as merely specifying the learning algorithm and hyperparameters would not be enough in this regard. As the exact resulting models, i.e. their weights, often depend also on the random initialization in the training process, only a full specification of model architecture and weights can be considered sufficient in this regard. Another practical issue is related to the assessment of multiple assisted medical testing systems. Unlike automated disease diagnosis, which is the main target of our methods, a (final) human decision is needed for each prediction in this case. For the evaluation study this would imply, in the worst case, that an equal number of independent readers is needed to supply such a decision. This aspect thus puts a natural (resource) limit on the number of evaluated models and needs to be considered in the planning stage.

\subsection{Extensions}\label{sec:extensions}

Our framework can be extended to classification problems with more than two classes, e.g. different disease types or severity grades. More generally, it can be employed in arbitrary prediction tasks for which not only the overall performance is important but also the performance in different subpopulations. So far, we have not considered such problems in simulation studies. It can however be expected that the operating characteristics of our multiple testing framework will largely depend on the sample size of the smallest subpopulation. This could render the approach practically infeasible for problems with many or an unbalanced distribution of subclasses. 

A rather simple adaptation which might be considered in the future is hypotheses weighting. The employed maxT-approach can rather easily be adopted for that matter \cite{dickhaus2012}. That is to say, rather then splitting up the significance level $\alpha$ equally for each of the $S$ hypotheses, one could spend more on the most promising models. A natural source of information to determine the $\alpha$ ratio is of course the empirical data from the model development phase. Further research on an optimal weighting and its impact on e.g. statistical power is however necessary.

A step further in connecting the evidence before and after the evaluation study would demand Bayesian methods. In a Bayesian model, a prior distribution could be based on validation results and be updated through the evaluation data. We have already experimentally employed a recently proposed multivariate Beta-binomial model for that matter \cite{SIMPle}. While this approach works reasonably well in realistic scenarios, several challenges remain. A main difficulty here is to specify an 'honest' prior distribution, i.e. to not have an arbitrarily optimistic prior. The contrary can however happen in practice when the prior is based on the validation data without any adjustments, in particular when the number of initial candidate models $M$ is large. A correction strategy which accounts for the (subset) selection process before the evaluation study should thus be developed to avoid such overoptimism.

Our \textit{optimal\, EFP} subset selection algorithm is designed to maximize the expected final model performance. While this is a natural goal, there are many other potential utility functions that could reasonably be optimized before the evaluation study. In particular, utility functions based on statistical power, estimation bias or a weighted combination of different criteria appear reasonable. Our novel approach (section \ref{sec:planning}, appendix \ref{app:optimal}) could also be used here to make the specification of the utility function independent of the subset size and thereby simpler for the practitioner. In its current form, our numerical implementation uses several approximations and can potentially be improved regarding its numerical efficiency.

%%%%%%%%%%%%%%%%%%%%%%%%%%%%%%%%%%%%%%%%%%%%%%%%%%%%%%%%%%%%%%%%%%%%%%%%%%

%%%%%%%%%%%%%%%%%%%%%%%%%%%%%%%%%%%%%%%%%%%%%%%%%%%%%%%%%%%%%%%%%%%%%%%%%%%%%%%%%

%%%%%%%%%%%%%%%%%%%%%%%%%%%%%%%%%%%%%%%%%%%%%%%%%%%%%%%%%%%%%%%%%%%%%%%%%%%%%%

%%%%%%%%%%%%%%%%%%%%%%%%%%%%%%%%%%%%%%%%%%%%%%%%%%%%%%%%%%%%%%%%%%%%%%%%%%%%%%%%
\section*{Acknowledgements}
The first and last author acknowledge funding by the Deutsche Forschungsgemeinschaft (DFG, German Research Foundation) - Project number 281474342/GRK2224/1.

\section*{Conflict of interest}
The authors declare that there is no conflict of interest.

%\printbibliography[heading=bibintoc]

\bibliography{literature}

\clearpage

\appendix

\section{Technical details}\label{app:technical}

\subsection{Parameter estimation}\label{app:estimation}

In the following, we briefly summarize the multivariate Beta-binomial model introduced by \citet{SIMPle} and how it can be used to obtain the regularized point estimators introduced in section \ref{sec:estimation}. We apply the procedure independently for diseased ($\Sebm$) and healthy ($\Spbm$) subsample but keep the notation in the following generic ($\bm \vartheta$) to streamline the presentation. We assume that $n$ observations of $S$ binary random variables $Q_{im}$, $i=1,\ldots,n$, $m\in \setS= \{1,\ldots,S\}$, are available. In the context of this work ${Q_{im}=\one(\hatf_i(\bm X_i)=Y_i)}$ indicates a correct prediction of model $m$ for the $i$-th observation. We suppose independent observations ($Q_{i_1 m}$ and $Q_{i_2 m}$) but two variables $Q_{im_1}$ and $Q_{im_2}$, $m_1,m_2 \in \setS$, may be correlated. This entitles us to assume i.i.d. observations $\bm Q_{i} = (Q_{i1},\ldots,Q_{iS})$ from a multivariate Bernoulli distribution. 
We are interested in the joint distribution of $\bm Q=(Q_1,\ldots,Q_S)$, 
in particular in the marginal means $\vartheta_m = \pr(Q_m=1)$. This may be modeled extensively by a multinomial distribution with parameter vector $\bm p \in \setP = \{(0,1)^{2^S}:\ |\bm p|_1=1\}$ of length $2^S$. Hereby $p_{\bm q} = \pr(\bm Q = \bm q)$ is the probability to observe the event $\bm q \in \{0,1\}^S$. The marginal mean $\vartheta_m$ can then be derived as the sum of all relevant probabilities, i.e. $\vartheta_m = \sum_{\bm q:\, q_m=1} p_{\bm q}$.

A popular Bayesian model that incorporates the multinomial likelihood is the so-called Dirichlet-multinomial model. Hereby, a Dirichlet prior for $\bm p$ with concentration parameter $\bm \gamma \in \mathbb{R}_+^{2^S}$ is assumed. 
According to \citet{SIMPle}, a multivariate Beta distribution can be derived as a simple linear transformation of a Dirichlet distribution and can thus be employed for the situation at hand. 
However, this extensive approach is often infeasible in practice due to large number of parameters.
Besides the full $2^S$-dimensional parametrization, a reduced representation can be derived which consists of a prior sample size $\nu \in \mathbb{R}_+$ and a (symmetric) prior moment matrix $\bm A \in \mathbb{R}^{S\times S}_+$. This amounts to a $\Beta(\bm A_{mm}, \nu-\bm A_{mm})$ prior for $\vartheta_m = \pr(Q_m=1)$ and a $\Beta(\bm A_{m_1 m_2}, \nu-\bm A_{m_1 m_2})$ prior for the probability  $\vartheta_{m_1 m_2}=\pr(Q_{m_1}=1 \wedge Q_{m_2}=1)$. The prior distribution $\pi(\bm \vartheta) \equiv \mBeta(\nu, \bm A)$ can easily be updated via the observed data ($n$, $\bm U$) whereby $n$ is the sample size and the update matrix $\bm U=\bm U (\bm Q)$ consists of elements $\bm U_{m_1 m_2}$ which denote the observed absolute frequencies $\sum_{i=1}^n \one(\ Q_{im_1}=1 \wedge Q_{im_2}=1)$. The posterior distribution of $\bm \vartheta$ given the observed data is then $\pi(\bm \vartheta \given \bm Q) \equiv \mBeta(\nu^\star, \bm A^\star)$ with $\nu^\star=\nu +n$  and $\bm A^\star=\bm A + \bm U$. This simple update rule is similar to the univariate Beta-binomial model and can be derived from the well-studied Dirichlet-multinomial model \cite{SIMPle}. 

We can employ the mBeta-binomial model to replace our usual estimators
\begin{align}\label{eq:prop_est}
\hat{\bm \vartheta} = \bm u / n \quad \text{and} \quad \hat{\bm \Sigma} = (\bm n \bm U - \bm u \bm u\tra)/n^3
\end{align}
for $\bm \vartheta$ and $\bm \Sigma = \cov(\hat{\bm \vartheta})$ whereby $\bm u = \diag(\bm U)$. We will replace the estimators in the last equation by the mean and covariance of the posterior distribution $\pi(\bm \vartheta \given \bm Q)$ which are given by
\begin{align}\label{eq:posterior_params}
\check{\bm \vartheta} = \E \bm \vartheta =\bm \alpha^\star / \nu^\star  \quad \text{and} \quad \check{\bm \Sigma} = \cov(\bm \vartheta) = (\bm A^\star - \bm \alpha^\star (\bm \alpha^\star)\tra) / ((\nu^\star)^2(\nu^\star+1))
\end{align}
whereby $\bm \alpha^\star = \diag(\bm A^\star)$ \cite{SIMPle}. If we employ a vague prior distribution which is composed of $S$ independent uniform distributions this implies $\nu=2$ and all elements of $\bm A$ are equal to $0.5$, except the diagonal which consists of ones. This amounts to adding two pseudo observations (one success, one failure) for each margin whereby half of the success pseudo-observation is counted as common for all variable pairs. Marginally, this approach thus has the same effect as using the popular univariate Beta-binomial posterior mean $\E \vartheta_m = (u_m + 1)/(n+2)$ as a point estimator $\check{\vartheta}_m$ (assuming a uniform prior), compare \citet[Chapter 9]{IBS}. 

\quad

We have applied the estimators $\check{\bm \vartheta}$ and $\check{\bm \Sigma}$ independently for sensitivity and specificity in all numerical experiments in this work. Note that the difference between the usual ($\hat{\bm \vartheta}, \hat{\bm \Sigma}$) and regularized estimators ($\check{\bm \vartheta}, \check{\bm \Sigma}$) vanishes asymptotically. In effect, the asymptotic results in section \ref{sec:inference} apply regardless of which estimators are used. For finite sample sizes the procedure has the advantage to prohibit zero variance estimates and thereby induced singular empirical covariances, compare section \ref{sec:inference}. Strictly speaking, the notation $\mBeta(\nu, \bm A)$ does only define first an (mixed) second-order moments of the distribution and is not a complete characterization. This is however sufficient for our purposes as a multivariate normal approximation is used for the statistical inference which only depends on this information. Note that we also use the mBeta-binomial model for our \textit{optimal\,EFP} selection rule, compare section \ref{sec:planning}. Here it is rather used to construct a generative mBeta distribution from the validation data (via the update rule mentioned above). This generative distribution then builds the foundation to simulate the evaluation study to determine the optimal number of models to evaluate prior to the study, compare also appendix \ref{app:optimal}.

\subsection{Statistical inference}\label{app:inference}

In this section, we re-state and prove the theoretical results from section \ref{sec:inference}. In the following, we work with two samples, i.e. $n_1$ diseased and $n_0$ healthy subjects. Hereby, we need to assume that
$n_1 / n \rightarrow \varrho \notin \{0,1\}$  as $n=n_1+n_0 \rightarrow \infty$. We can then derive $\min(n_1,n_0) \rightarrow \infty$ from $n \rightarrow \infty$. This is for instance given under simple random sampling with a fixed disease prevalence, i.e. when assuming $n_1 \sim \Bin(n, \varrho)$, $n_0=n-n_1$, $\varrho \in (0,1)$.

\begin{theorem}\label{res:distribution}
	Assume $\Se_m -\Se_0 \neq \Sp_m -\Sp_0$ for all $m \in \setS$. Then, under the LFC $\bm \theta^{\bm b} = (\Sebm^{\bm b}, \Spbm^{\bm b})$,
	\begin{align}
	\Tbm^{\bm b}_{(n)} \stackrel{\D}{ \longrightarrow} \setN_S(\bm 0, \bm R^{\bm b}), \quad n \rightarrow \infty,
	\end{align}
	whereby $\Tbm^{\bm b}=\Tbm^{\bm b}_{(n)}$ is defined in \eqref{eq:tstat_dagger} and $\bm R^{\bm b} = \BB \bm R^{\Se} \BB + (\bm I_S - \BB) \bm R^{\Sp}  (\bm I_S - \BB) $.
\end{theorem}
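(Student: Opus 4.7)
The plan is to establish joint asymptotic normality of the non-degenerate entries of $\bm T_2$ via the multivariate CLT on each subsample, and then to recover the limit of $\bm T^{\bm b} = (\bm B_2)\tra \bm T_2$ as a linear image. I would first decompose $T_m^{\bm b} = b_m T_m^{\Se} + (1-b_m) T_m^{\Sp}$ and split $\setS = \setS_1 \cup \setS_0$ with $\setS_k = \{m : b_m = k\}$. Under $\bm \theta^{\bm b}$ we have $\Se^{\bm b}_m = \Se_0 \in (0,1)$ and $\Sp^{\bm b}_m = 1$ for $m \in \setS_1$, and symmetrically for $m \in \setS_0$. The components $T_m^{\Sp}$ with $m \in \setS_1$ and $T_m^{\Se}$ with $m \in \setS_0$ are attached to Bernoulli margins at the boundary $1$, but they are annihilated by $(\bm B_2)\tra$ and hence irrelevant to the limit of $\bm T^{\bm b}$.

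Next I would apply the multivariate CLT to the per-observation correctness vectors $\bm Q_i^{\Se}\in\{0,1\}^S$, which are i.i.d.\ multivariate Bernoulli on the diseased subsample. Restricting to the coordinates $m \in \setS_1$, where the marginal mean lies in the open interval, and noting that $n_1/n \to \varrho \in (0,1)$ forces $n_1 \to \infty$, this gives
\begin{align}
\sqrt{n_1}\,\bigl(\Sehat_m - \Se_0\bigr)_{m \in \setS_1} \stackrel{\D}{\longrightarrow} \setN\bigl(\bm 0,\, \Se_0(1-\Se_0)\,\bm R^{\Se}_{\setS_1}\bigr),
\end{align}
where $\bm R^{\Se}_{\setS_1}$ is the correlation submatrix for indices in $\setS_1$; an analogous limit holds on the healthy subsample for $m \in \setS_0$. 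Since $\Sehat_m \to \Se_0$ in probability for $m \in \setS_1$, the continuous mapping theorem yields consistency of $\widehat{\se}(\Sehat_m)$ for its population counterpart, and Slutsky's theorem then gives
\begin{align}
(T_m^{\Se})_{m \in \setS_1} \stackrel{\D}{\longrightarrow} \setN(\bm 0, \bm R^{\Se}_{\setS_1}), \qquad (T_m^{\Sp})_{m \in \setS_0} \stackrel{\D}{\longrightarrow} \setN(\bm 0, \bm R^{\Sp}_{\setS_0}).
\end{align}

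Because the diseased and healthy observations form independent subsamples, the two limits will be jointly independent and joint convergence to the corresponding block-diagonal normal law is immediate. Recognising that the $m$-th entry of $\bm T^{\bm b}$ equals $T_m^{\Se}$ when $b_m = 1$ and $T_m^{\Sp}$ when $b_m = 0$, the continuous mapping theorem then yields $\bm T^{\bm b} \stackrel{\D}{\longrightarrow} \setN(\bm 0, \bm R^{\bm b})$ with $(m,m')$ covariance entry
\begin{align}
b_m b_{m'} R^{\Se}_{mm'} + (1-b_m)(1-b_{m'}) R^{\Sp}_{mm'},
\end{align}
the mixed cross terms vanishing by independence of the two subsamples. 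In matrix form this is exactly $\BB \bm R^{\Se} \BB + (\bm I_S - \BB) \bm R^{\Sp} (\bm I_S - \BB) = \bm R^{\bm b}$.

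The main obstacle will be the boundary degeneracy: the coordinates of $\bm T_2$ corresponding to a true margin of $1$ are degenerate under the LFC, since the estimator equals $1$ almost surely and the statistic is nominally $0/0$. I would handle this structurally rather than by invoking the regularization from appendix \ref{app:estimation}: the projection $(\bm B_2)\tra$ discards exactly those coordinates, so only the non-degenerate margins enter the limit. The assumption $\Delta^{\Se}_m \neq \Delta^{\Sp}_m$ ensures that $\bm b$, and hence the partition and projection, are well-defined.
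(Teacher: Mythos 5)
Your proposal is correct and follows essentially the same route as the paper's proof: a multivariate CLT applied to the non-degenerate sub-vectors of $\TSebm$ and $\TSpbm$ on the two independent subsamples, followed by the continuous mapping theorem applied to the linear projection $(\BB_2)\tra$, yielding the covariance $\BB \bm R^{\Se} \BB + (\bm I_S - \BB)\bm R^{\Sp}(\bm I_S - \BB)$. Your explicit treatment of the degenerate boundary coordinates (and of Slutsky for the estimated standard errors) is slightly more careful than the paper's, but it is the same argument.
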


\begin{proof}
	Under the LFC $\bm \theta^{\bb}$, the sub-vector $\TSebm_{\bm b}$ of elements of $\TSebm$ indicated by $b_m=1$ converges in distribution to a multivariate normal random variable of dimension $|\bm b| = \sum_{m=1}^S b_m$ due to the multivariate central limit theorem. Similarly, the sub-vector $\TSpbm_{\bm b}$ of elements of $\TSpbm$ indicated by $b_m=0$ converges to a multivariate standard normal of dimension $S-|\bm b|$. We write
	\begin{align}
	\TSebm_{\bm b} \stackrel{\D}{\longrightarrow} \bm Z^{\Se}_{\bm b} \sim \setN_{|\bm b|}(\bm 0, \bm R^{\Se}_{\bm b}) \quad \text{and} \quad \TSpbm_{\bm b} \stackrel{\D}{\longrightarrow} \bm Z_{\bm b}^{\Sp} \sim \setN_{(S-|\bm b|)}(\bm 0, \bm R^{\Sp}_{\bm b}), \quad n \rightarrow \infty.
	\end{align}
	We denote the full, $S$-dimensional versions of these limiting distributions by $\bm Z^{\Se} \sim \setN_S(\bm \Delta^{\Se}, \bm R^{\Se})$ and $\bm Z^{\Sp} \sim \setN_S(\bm \Delta^{\Sp}, \bm R^{\Sp})$. $\bm Z^{\Se}$ and $\bm Z^{\Sp}$ are singular because they are constant in the remaining, not directly relevant entries. 
	
	The statistics $\TSebm$ and $\TSpbm$ are independent of each other because they are based on distinct samples. This also holds for their limits $\bm Z^{\Se}$ and $\bm Z^{\Sp}$. Moreover, left multiplication with the matrix $({\BB}_2)\tra$ (compare \ref{eq:B2T2}) defines a linear and thus continuous mapping. Together with the continuous mapping theorem, this implies
	\begin{align}
	\Tbm^{\bb}_{(n)} = ({\BB}_2)\tra \bm{T}_{2(n)} \stackrel{\D}{ \longrightarrow} ({\BB}_2)\tra \bm{Z}_2 = \bm Z^{\bb}, \quad n \rightarrow \infty \end{align}  
	whereby $\bm Z_2 \sim \setN_{2S}(\bm \Delta_2, \bm{R}_2)$ with
	\begin{align}
	\bm \Delta_2 = (\bm \Delta^{\Se}, \bm \Delta^{\Sp}) \quad \text{and} \quad \bm{R}_2 = \begin{pmatrix}
	\bm R^{\Se} &\bm 0\\ 
	\bm 0  &\bm R^{\Sp}
	\end{pmatrix}.
	\end{align}
	The matrix $({\BB}_2)\tra$ is fixed and the distribution of $\bm Z^{\bb}$ is thus $\setN_S(\bm 0, \bm R^{\bb})$ with
	\begin{align}
	\bm R^{\bb} = (\BB_2) \bm{R}_2  (\BB_2)\tra = \BB \bm R^{\Se} \BB + (\bm I_S - \BB) \bm R^{\Sp} (\bm I_S - \BB) \in [-1,1]^{S \times S}.
	\end{align}
	Note that $\bm B$ and $(\bm I_S - \BB) $ are diagonal and thus symmetric matrices. This concludes the proof. 
	{\hfill\ensuremath{\square}}
\end{proof}

\begin{lemma}\label{res:consistency}
	Assume $\Se_m -\Se_0 \neq \Sp_m -\Sp_0$ for all $m \in \setS$. Then 
	\begin{align}
	\hat{\bm  R}^{\bm b} = \BBhat \hat{\bm  R}^{\Se} \BBhat + (\bm I_S - \BBhat) \hat{\bm  R}^{\Sp}  (\bm I_S - \BBhat)
	\end{align}
	is a consistent estimator for $\bm R^{\bm b}$.
\end{lemma}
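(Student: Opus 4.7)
The plan is to decompose the convergence into three pieces: consistency of the marginal correlation matrix estimators, consistency of the indicator vector $\hatbb$, and then a continuous-mapping argument that glues them together.

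First I would record that $\hat{\bm R}^{\Se}$ and $\hat{\bm R}^{\Sp}$ are consistent estimators of $\bm R^{\Se}$ and $\bm R^{\Sp}$. This is standard: each is obtained from the sample covariance of the i.i.d.\ rows of the corresponding similarity matrix $\bm Q^{\Se}$ or $\bm Q^{\Sp}$ (with the regularization from appendix \ref{app:estimation} that vanishes asymptotically), so the weak law of large numbers gives consistency of the sample covariance, and the correlation matrix is a continuous function of the covariance on the relevant parameter region (all diagonal entries of the true covariance are strictly positive whenever $\Se_m, \Sp_m \in (0,1)$). Hence $\hat{\bm R}^{\Se} \stackrel{\pr}{\to} \bm R^{\Se}$ and $\hat{\bm R}^{\Sp} \stackrel{\pr}{\to} \bm R^{\Sp}$.

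The main obstacle is showing $\hatbb \stackrel{\pr}{\to} \bb$, because the indicator map $x \mapsto \one(x<0)$ is discontinuous at zero. Here the non-degeneracy hypothesis $\Se_m - \Se_0 \neq \Sp_m - \Sp_0$ is crucial: it means the limiting argument $\Delta_m^{\Se} - \Delta_m^{\Sp}$ is bounded away from $0$, so the indicator is continuous at the limit point. Concretely, fix $m$ and set $\delta_m = |\Delta_m^{\Se} - \Delta_m^{\Sp}| > 0$. By the weak law of large numbers, $\Sehat_m \stackrel{\pr}{\to} \Se_m$ and $\Sphat_m \stackrel{\pr}{\to} \Sp_m$, so
\begin{align}
(\Sehat_m - \Se_0) - (\Sphat_m - \Sp_0) \stackrel{\pr}{\longrightarrow} \Delta_m^{\Se} - \Delta_m^{\Sp}.
\end{align}
If $\hat b_m \neq b_m$, then the left-hand side above and $\Delta_m^{\Se} - \Delta_m^{\Sp}$ have opposite signs, so their difference exceeds $\delta_m$ in absolute value. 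Therefore
\begin{align}
\pr(\hat b_m \neq b_m) \leq \pr\bigl( |(\Sehat_m - \Sphat_m) - (\Se_m - \Sp_m)| \geq \delta_m \bigr) \longrightarrow 0,
\end{align}
which gives $\hat b_m \stackrel{\pr}{\to} b_m$. A union bound over the finitely many indices $m \in \setS$ then yields $\hatbb \stackrel{\pr}{\to} \bb$, and consequently $\hat{\BB} = \diag(\hatbb) \stackrel{\pr}{\to} \BB$ and $\bm I_S - \hat{\BB} \stackrel{\pr}{\to} \bm I_S - \BB$.

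Finally, I would invoke the continuous mapping theorem. The map
\begin{align}
(\bm B_1, \bm B_2, \bm R_1, \bm R_2) \;\longmapsto\; \bm B_1 \bm R_1 \bm B_1 + \bm B_2 \bm R_2 \bm B_2
\end{align}
is continuous (a polynomial in matrix entries), and joint convergence in probability of the four arguments follows from Slutsky's theorem since all four converge in probability to (deterministic) limits. Applying this map to $(\hat{\BB}, \bm I_S - \hat{\BB}, \hat{\bm R}^{\Se}, \hat{\bm R}^{\Sp})$ gives
\begin{align}
\hat{\bm R}^{\bb} \;\stackrel{\pr}{\longrightarrow}\; \BB \bm R^{\Se} \BB + (\bm I_S - \BB)\bm R^{\Sp}(\bm I_S - \BB) \;=\; \bm R^{\bb},
\end{align}
which is the claimed consistency. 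The only delicate point is the indicator-convergence step above; everything else is routine Slutsky/continuous-mapping bookkeeping.
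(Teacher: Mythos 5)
Your proof is correct and follows essentially the same route as the paper's: consistency of $\hat{\bm R}^{\Se}$ and $\hat{\bm R}^{\Sp}$, convergence in probability of $\hatbb$ to $\bb$ (where the assumption $\Se_m-\Se_0 \neq \Sp_m-\Sp_0$ enters), and the continuous mapping theorem to combine them. You simply spell out the indicator-convergence step that the paper states without detail, which is a welcome but not conceptually different addition.
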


\begin{proof}
	The sample correlation matrix $\hat{\bm R}^{\Se}$ derived from the sample covariance matrix $\hat{\bm \Sigma}^{\Se} $ is a consistent estimator for $\bm R^{\Se}$, the same holds for $\hat{\bm R}^{\Sp}$. Moreover, $\hat{\bb}$ converges in probability to $\bb$. An analogue statement thus holds for $\hat{\BB}$. These individual convergences can be connected with help of the continuous mapping theorem to derive the result.
	{\hfill\ensuremath{\square}}
\end{proof}

\begin{proposition}\label{res:fwer}
	With $c_\alpha$ calculated according to \eqref{eq:fwer_control}, the simultaneous testing procedure $\bm \varphi$ defined by
	\begin{align}
	\varphi_m = 1 \quad  \Longleftrightarrow \quad T_m = \min(T_m^{\Se}, T_m^{\Sp}) > c_\alpha,
	\end{align}
	defines a multiple test with asymptotic strong FWER control at level $\alpha$ for hypothesis system \eqref{eq:hyp}.
\end{proposition}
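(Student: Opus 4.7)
The plan is to reduce the full strong-control statement to the asymptotic distribution of $\bm T^{\bb}$ already established in Theorem \ref{res:distribution} through a chain of three inequalities, and then handle the data-dependence of $c_\alpha$ with Slutsky's theorem. Fix an arbitrary parameter configuration $\bm \theta$ with true-null index set $\setS_0 = \setS_0(\bm \theta, \theta_0) \subseteq \setS$. I would first construct the projection vector $\bb$ with $b_m = \one(\Delta_m^{\Se} < \Delta_m^{\Sp})$ and the associated LFC $\bm \theta^{\bb}$ and statistic $\bm T^{\bb}$ as in Section \ref{sec:inference}. The goal is then to show $\limsup_{n} \pr_{\bm \theta}(\max_{m \in \setS_0} T_m > c_\alpha) \leq \alpha$.

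The first reduction is the deterministic bound $T_m = \min(T_m^{\Se}, T_m^{\Sp}) \leq T_m^{\bb}$, which holds since $T_m^{\bb} \in \{T_m^{\Se}, T_m^{\Sp}\}$. The second reduction is (asymptotic) stochastic domination for indices in $\setS_0$: by construction of $\bb$, the coordinate selected by $b_m$ has true mean no larger than the corresponding threshold under $\bm \theta$ (this is the defining property of $\setS_0$), whereas at $\bm \theta^{\bb}$ it sits exactly on the boundary; since the standard-error estimators are ratio-consistent in both cases, $T_m^{\bb}$ under $\bm \theta$ is asymptotically stochastically dominated by $T_m^{\bb}$ under $\bm \theta^{\bb}$, and this transfers to the joint distribution in the standard way. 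The third reduction enlarges the index set from $\setS_0$ to $\setS$, which can only increase the maximum. Chaining these yields
\begin{equation*}
\pr_{\bm \theta}\!\left(\max_{m \in \setS_0} T_m > c_\alpha\right) \leq \pr_{\bm \theta^{\bb}}\!\left(\max_{m \in \setS} T_m^{\bb} > c_\alpha\right) + o(1).
\end{equation*}

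To finish, I would invoke Theorem \ref{res:distribution} to get $\bm T^{\bb} \stackrel{\D}{\longrightarrow} \setN_S(\bm 0, \bm R^{\bb})$ under $\bm \theta^{\bb}$, and Lemma \ref{res:consistency} for $\hat{\bm R}^{\bb} \stackrel{p}{\to} \bm R^{\bb}$. Since the mapping $\bm R \mapsto c_\alpha(\bm R)$ defined implicitly by \eqref{eq:fwer_control} is continuous in $\bm R$ on the set of admissible correlation matrices (the multivariate normal CDF $\Phi_S(\cdot, \bm R)$ is jointly continuous in its argument and in $\bm R$, and is strictly monotone in $c$), the random critical value satisfies $c_\alpha(\hat{\bm R}^{\bb}) \stackrel{p}{\to} c_\alpha(\bm R^{\bb})$. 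Combining the weak convergence of $\bm T^{\bb}$ with this convergence via Slutsky's theorem and the continuous mapping theorem gives
\begin{equation*}
\pr_{\bm \theta^{\bb}}\!\left(\max_{m \in \setS} T_m^{\bb} > c_\alpha\right) \longrightarrow 1 - \Phi_S\!\bigl(c_\alpha(\bm R^{\bb}),\, \bm R^{\bb}\bigr) = \alpha.
\end{equation*}
Since $\bm \theta$ was arbitrary, this is strong asymptotic FWER control at level $\alpha$.

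The main obstacle I expect is the second reduction together with the data-dependence of $c_\alpha$: making the stochastic-domination argument rigorous at the joint level requires care because the normalized statistic $\bm T^{\bb}$ involves estimated standard errors whose behavior under $\bm \theta$ and under $\bm \theta^{\bb}$ only agree in the limit. I would handle this by standard delta-method/ratio-consistency arguments following \citet{hothorn2008}, who treated the analogous issue in the single-endpoint maxT framework. A secondary technicality is the tie case $\Delta_m^{\Se} = \Delta_m^{\Sp}$ excluded in Theorem \ref{res:distribution}; this can be addressed by an $\varepsilon$-perturbation argument that notes $T_m^{\bb}$ is invariant (up to the limit) under either choice of $b_m$ when the two deltas coincide, so the limiting distribution is unaffected.
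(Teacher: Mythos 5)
Your proposal is correct and follows essentially the same route as the paper's proof: the deterministic bound $T_m \leq T_m^{\bb}$, reduction to the boundary/LFC case, Theorem \ref{res:distribution} and Lemma \ref{res:consistency} for the limiting law and the consistent critical value, with the remaining stochastic-domination details delegated to the maxT framework of \citet{hothorn2008}. You merely spell out (including the tie case and the continuity of $\bm R \mapsto c_\alpha(\bm R)$) steps that the paper's very terse proof outsources to that reference, so no substantive difference in approach.
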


\begin{proof}
	We replace $\varphi_m$ in the assertion with $\varphi_m^{\bb}$ defined by
	\begin{align}
	\varphi_m^{\bb} = 1 \quad  \Longleftrightarrow \quad T_m^{\bb} > c_\alpha
	\end{align}
	with $T^{\bb}$ defined in \eqref{eq:tstat_dagger}. Then the modified assertion holds due to theorem \ref{res:distribution}, lemma \ref{res:consistency} and the results by \citet{hothorn2008}. As $\bm T^{\bb}$ is unknown (as $\bb$ is unknown) in practice, we need to replace $T^{\bb}$ with $\bm T \leq \bm T^{\bb}$ which will result in equal or less rejections. This proves the original statement.
\end{proof}

%%%%%%%%%%%%%%%%%%%%%%%%%%%%%%%%%%%%%%%%%%%%%%%%%%%%%%%%%%%%%%%%%%%%%%%%%%%%%%%%%%%%%%%%
\clearpage
\section{Optimal subset selection}\label{app:optimal}

The \textit{optimal\,EFP} model selection rule was described in section \ref{sec:planning}. In this section, the numerical implementation is described in more detail in algorithm \ref{algo_full}. The input parameters are described in the following, including their default values in the simulation study described in section \ref{sec:mle_sim}. 
\begin{itemize}
	\setlength\itemsep{0.075em}
	\item $\Delta_0 = \Se_0-\Sp_0$ indicates the relative importance of sensitivity and specificity. In the numerical experiments in this work, we have only considered the case $\Delta_0=0$.
	\item $\bm Q_\setV = (\bm Q^{\Se}_\setV, \bm Q^{\Sp}_\setV)$ are the validation similarity matrices, compare section \ref{sec:estimation}.
	\item $S_{max}$ is the maximum number of models to be evaluated. By default, we set $S_{max}=\sqrt{n_\setE}$. The main purpose of this threshold is to reduce the computational complexity of the optimization process in our simulation study. 
	\item \texttt{prerank = function($\bm Q_\setV$, $S_{max}$)}. The initial list of candidate models is ranked initially to reduce the subset selection problem $\setS \subset \setM$ to the selection of the optimal number $S \in \{1,\ldots,M\}$ of models to be evaluated. By default, the ranking is conducted according to $\min(T^{\Se}(\setV), T^{\Sp}(\setV))$ whereby the test statistics are calculated similar to \eqref{eq:teststat_def} but based on the validation data $\setV$.
	\item \texttt{moment\_matrix = function($\bm Q$)} converts the similarity matrix $\bm Q$ into the first and second order moment matrix $\bm U$ which was described in appendix \ref{app:estimation}.
	\item $\nu_p=(\nu_p^{\Se},\nu_p^{\Sp} )$ and $\bm A_p = (\bm A_p^{\Se}, \bm A_p^{\Se})$ are initial prior parameters. We employ simple independent uniform priors per default which correspond to $\nu_p=2$ and $A_p$ is a $S \times S$-matrix with all entries equal to $0.5$ except diagonal entries which are equal to $1$. This implies that the generative prior distribution is only determined by the hold-out validation data.
	\item \texttt{max.iter} is the maximum number of iterations, set to 250 by default. This number was mainly chosen for our simulation study. For a single run of the algorithm for a real use-case, a larger number of iterations (500 to 1000) is easily feasible and hence recommended.
	\item \texttt{num.tol} is the numerical tolerance for the stopping criterion, set to $0.001$ by default.
	\item $n$ is the (assumed) sample size for the evaluation study. We assume simple random sampling. Hereby we assume a Beta prior for the disease prevalence $\varrho=\pr(Y=1)$ which is constructed from the learning data. That is, we first sample $\varrho \sim \Beta(1+n_1^\setL, 1+n_0^{\setL})$ and then $n_1 \sim \Bin(n, \varrho)$ and finally set $n_0=n-n_1$. Of course, in practice this could easily be adapted when more information on $\varrho$ is available. This is in particular useful (and recommended) when the learning data is from a case-control study with non-representative disease prevalence.
	\item \texttt{mstar = function($\bm \thetahat$, ...)} is a function of the evaluation data, i.e. of ${\bm \thetahat=(\Sebm, \Spbm)}$ the empirical test performances and potentially further parameters (e.g. $\widehat{\se}(\Sebm)$). By default, the final model is chosen as $m^* =\argmax_{m\in \setS} \min (T_m^{\Se}, T_m^{\Sp})$.
	\item \texttt{Sstar = function($\bm E$, ...)} is a function of the simulation result matrix $\bm E$. By default, we choose $S^*$ as the smallest $S$ such that $\EFPhat(S^*)$ is not more than one standard error smaller than $\max_{m \in \setS}\EFPhat(S)$ in the current iteration. This is illustrated at the end of section \ref{sec:planning} in figure \ref{fig:algo_example}.
	\item \texttt{sample = function(dist, $N$, \ldots)} returns a sample of the input probability distribution \texttt{dist} of size $N$. Samples from the mBeta distribution are obtained via a copula approach \cite{SIMPle}. 
	\item $\pmin(\bm u, \bm v) = (\min(u_m, v_m))_{m=1,\ldots,M}$ is the pairwise minimum of two vectors $\bm u, \bm v$ of length $M$.
\end{itemize}

%\quad\\

\SetAlCapSkip{1em}

{
	\begin{algorithm}[H]
		\label{algo_full}
		\SetAlgoLined
		\KwIn{$\bm Q_\setV$, $S_{max}$, $\Delta_0$, $n$, $\nu_p$, $\bm A_p$, \texttt{max.iter}, \texttt{num.tol}}
		\KwResult{$S^*=1$}
		\# rank models and reduce number of models to $S_{max}$\\
		$\bm Q_\setV=$ \texttt{prerank($\bm Q_\setV$, $S_{max}$)}\;
		\# construct posterior (= generative prior) distribution $\pi$ based on prior ($\nu_p, \bm A_p$) and validation data $\bm Q_{\setV}$ \\
		$\nu^{\Se} = \nu_p^{\Se} +$ \texttt{nrow($\bm Q^{\Se}$)}; $\bm A^{\Se}=\bm A^{\Se}_p + \text{\texttt{momemt\_matrix(}} \bm Q^{\Se} \text{\texttt{)}} $; $\pi^{\Se} = \mBeta(\nu^{\Se}, \bm A^{\Se})$\;
		$\nu^{\Sp} = \nu_p^{\Sp} +$ \texttt{nrow($\bm Q^{\Sp}$)}; $\bm A^{\Sp}=\bm A^{\Sp}_p + \text{\texttt{momemt\_matrix(}} \bm Q^{\Sp} \text{\texttt{)}} $; $\pi^{\Sp} = \mBeta(\nu^{\Sp}, \bm A^{\Sp})$\;
		$\pi = (\pi^{\Se}, \pi^{\Sp})$\;
		\# initialize evaluation study result matrix $\bm E$ and $\EFPbmhat$\\
		$\bm E =$ \texttt{matrix(NA,\,nrow=max.iter,\,ncol=$S_{max}$)}; $\EFPbmhat=$\texttt{rep(NA,\,$S_{max}$)}\;
		$i = 1$; $\epsilon = \infty$\;
		\# each iteration of the while loop represents a single simulated evaluation study\\
		\While{\texttt{i} $\leq$ \texttt{max.iter}\ \&\ $\epsilon >$ \texttt{num.tol}}{
			\# sample true parameters $\bm \vartheta = (\Sebm, \Spbm)$ including correlation structure  $\bm C=(\bm C^{\Se}, \bm C^{\Sp})$ from prior $\pi$\\
			$(\Sebm, \bm C^{\Se})=$ \texttt{sample($\pi^{\Se}, N=1$)};
			$(\Spbm, \bm C^{\Sp})=$ \texttt{sample($\pi^{\Sp}, N=1$)};
			$\bm \vartheta = \pmin(\Sebm, \Spbm +\Delta_0)$\;
			%$(\bm \vartheta, \bm C)=$ \texttt{sample($\pi, N=1$)}\;
			\# determine sample sizes $n_1$ and $n_0$ \\
			$\varrho=$ \texttt{sample($\Beta(1+n_1^\setL, 1+n_0^{\setL}), N=1$)}\;
			$n_1 =$ \texttt{sample($\Bin(n, \varrho), N=1$)}; $n_0 = n-n_1$\;
			\# construct sampling distribution of estimators $\Sehatbm, \Sphatbm$ based on these parameters\\
			$\p^{\Se} = \mBin(\Sebm, \bm C^{\Se}, n_1)$;
			$\p^{\Sp} = \mBin(\Spbm, \bm C^{\Sp}, n_0)$\; 
			\# parameter estimates $\hat{\bm \theta}$ from $\p=(p^{\Se}, p^{\Sp})$\\
			$\Sehatbm =$ \texttt{sample($p^{\Se}, N=1$)}/$n_1$;
			$\Sphatbm =$ \texttt{sample($p^{\Sp}, N=1$)}/$n_0$; 
			$\hat{\bm \vartheta} = (\Sehatbm, \Sphatbm)$\;
			%$(\bm \thetahat, ...)=$ \texttt{sample($\p, N=(n_1, n_0)$)}\;
			\For{S=1 to $S_{max}$}{
				\# determine empirically best model given $S$ where evaluated\\
				$m^* =$ \texttt{mstar($\hat{\bm \theta}$[1:S], ...)}\;
				
				\# determine and save true performance of model $m^*$\\
				$\bm E[i,S]= \bm \vartheta_{m^*} = \pmin(\Se_{m^*}, \Sp_{m^*}+\Delta_0)$\;
			}
			\# estimate EFP by averaging over all simulated evaluation study results so far\\
			$ \EFPbmhat =$ \texttt{colMeans($\bm E$, na.rm=T)}\;
			\# determine optimal number of models\\
			$ S^*=$ \texttt{Sstar($\EFPbmhat$, ...)}\;
			$\epsilon=$ \texttt{sqrt(var($\bm E$[\,,$m^*$], na.rm=T)/i)}; \;
			$i = i+1$\;
		}
		\caption{Pseudo code algorithm for the \textit{optimal\,EFP} model selection rule: Optimization of the expected final model performance depending on the (number of) models to be evaluated.}
	\end{algorithm}
}

\quad

\section{Additional simulation results}\label{app:experiments}

\begin{table}
	\fontsize{11pt}{11pt}\selectfont%
	\rowcolors*{3}{black!10}{}%
	\renewcommand{\arraystretch}{1.25}%
	\arrayrulecolor{black!20}%
	\setlength{\arrayrulewidth}{1pt}%
	
	\caption{Detailed comparison of subset selection rules, stratified for evaluation sample size $n_\setE \in \{400, 800\}$, compare section \ref{sec:mle_sim}. 
	All cells show expectations or probabilities, each estimated based on $N_{sim}=60000$ simulation instances. 
	The last column displays mean differences between the \textit{optimal\,EFP} and \textit{within\,1\,SE} rules including unadjusted 99\% confidence intervals. 
	}
	
	\begin{tabular}{ll|R{1.45cm} R{1.45cm} R{1.9cm} R{2.2cm}|r}
		\toprule
		& $\bm n_\setE$  & \textbf{oracle} & \textbf{default} & \textbf{within\,1\,SE} & \textbf{optimal\,EFP} & \textbf{optimal\,EFP - within\,1\,SE} \\ 
		\midrule\\[-0.6cm]
		\midrule
		$\E\vartheta_*$ & 400  & 0.819 & 0.754 & 0.789 & 0.794 & 0.005 (\phantom{-}0.005, \phantom{-}0.006) \\ 
		& 800  & 0.819 & 0.754 & 0.793 & 0.800 & 0.007 (\phantom{-}0.007, \phantom{-}0.008) \\ 
		$\pr(\vartheta_* > 0.75)$ & 400  & 0.980 & 0.569 & 0.781 & 0.834 & 0.053 (\phantom{-}0.049, \phantom{-}0.057) \\ 
		& 800  & 0.980 & 0.572 & 0.813 & 0.876 & 0.063 (\phantom{-}0.060, \phantom{-}0.067) \\ 
		$rr(0.10)$ & 400  & 0.743 & 0.386 & 0.578 & 0.609 & 0.030 (\phantom{-}0.026, \phantom{-}0.034) \\ 
		& 800  & 0.916 & 0.507 & 0.773 & 0.832 & 0.059   (\phantom{-}0.056, \phantom{-}0.063) \\ 
		$rr(0.05)$ & 400  & 0.236 & 0.085 & 0.105 & 0.111 & 0.007 (\phantom{-}0.004, \phantom{-}0.009) \\ 
		& 800  & 0.514 & 0.180 & 0.280 & 0.301 & 0.021 (\phantom{-}0.017, \phantom{-}0.024) \\  
		$rr(0.00)$ & 400  & 0.004 & 0.001 & 0.001 & 0.001 & 0.000 (-0.000, \phantom{-}0.001) \\ 
		& 800  & 0.007 & 0.001 & 0.001 & 0.001 & -0.000 (-0.000, \phantom{-}0.000) \\ 
		$\FWER$ & 400  & 0.004 & 0.017 & 0.011 & 0.011 & -0.000 (-0.002, \phantom{-}0.001) \\ 
		& 800  & 0.007 & 0.019 & 0.010 & 0.009 & -0.002 (-0.002, -0.001) \\ 
		$\text{Bias}$ & 400  & -0.015 & -0.009 & -0.022 & -0.022 & 0.000 (-0.000, \phantom{-}0.000) \\ 
		& 800 & -0.009 & -0.005 & -0.016 & -0.017 & -0.001 (-0.001, -0.001) \\
		$\MAE_2$ & 400  & 0.034 & 0.076 & 0.053 & 0.046 & -0.006 (-0.007, -0.006) \\ 
		& 800  & 0.026 & 0.071 & 0.044 & 0.037 & -0.008 (-0.008, -0.007) \\ 
		$\pr(O_1)$ & 400  & 0.536 & 0.603 & 0.492 & 0.490 & -0.001 (-0.006, \phantom{-}0.004) \\ 
		& 800  & 0.574 & 0.623 & 0.512 & 0.498 & -0.013 (-0.018, -0.008) \\ 
		$\pr(O_2)$ & 400 & 0.344 & 0.192 & 0.098 & 0.114 & 0.016 (\phantom{-}0.013, \phantom{-}0.019) \\ 
		& 800  & 0.338 & 0.189 & 0.093 & 0.093 & 0.000 (-0.003, \phantom{-}0.003) \\ 
		$\E S$ & 400  & 1.332 & 1.726 & 13.269 & 9.808 & -3.461 (-3.531, -3.392) \\ 
		& 800  & 1.332 & 1.732 & 15.298 & 13.615 & -1.683 (-1.775, -1.592) \\ 
		\bottomrule
	\end{tabular}
	\caption*{\textbf{Performance:}\\
		$\E\vartheta_{*}=\E\min(\Se_*, \Sp_*)$ \quad (expected true performance $\vartheta_{*}=\vartheta_{m^*}$ of final selected model)\\
		$\pr(\vartheta_* > 0.75)$ \quad \hspace{1.5cm} (probability to obtain a final model with performance $\vartheta_*>0.75$)\\
		\textbf{Test decisions:}\\
		$rr(0.10) = \pr(\varphi_{*}=1 \given \delta = 0.10)$ \quad (power, $\delta = \vartheta_{\op} - \vartheta_0$ is the difference between \hfill \\
		 \phantom{-} \hspace{5.5cm} truly highest performance $\vartheta_{\op}$ and threshold $\vartheta_0$)\\
		$rr(0.05) = \pr(\varphi_{*}=1 \given \delta = 0.05)$ \quad (power)\\
		$rr(0.00) = \pr(\varphi_{*}=1 \given \delta = 0.00)$ \quad (unconditional FWER, i.e. for hypothesis system \eqref{eq:hyp_ext})\\
		$\FWER \hspace{0.175cm}  = \pr(\varphi_{*}=1 \given \vartheta_0 = \vartheta_{*})$ \quad \  (conditional FWER, i.e. for hypothesis system \eqref{eq:hyp})\\
		\textbf{Estimation:}\\
		$\text{Bias} \ \ \ \  = \E(\widetilde{\vartheta_*}-\vartheta_*)$  \hspace{3.61cm} (bias regarding estimation of $\vartheta_{*} = \min(\Se_*,\Sp_*)$)\\
		$\MAE_2 = \E(|\widetilde{\Se}_{*}-{\Se}_{*} |+| \widetilde{\Sp}_{*}-{\Sp}_{*} |)/2$ \quad  (combined mean absolute error of corrected point estimates\\ 
		\phantom{-} \hspace{7cm} $\widetilde{\Se}_{*}$ and $\widetilde{\Sp}_{*}$ for parameters $\Se_*$, $\Sp_*$ of final selected model)\\
		$\pr(O_1) = \pr(\widetilde{\Se}_{*} > \Se_{*} \vee\ \widetilde{\Sp}_{*} > \Sp_{*})$ \hspace{1.15cm} (probability to overestimate $\Se_*$ or $\Sp_*$)\\
		$\pr(O_2) = \pr(\widetilde{\Se}_{*} > \Se_{*} \wedge\ \widetilde{\Sp}_{*} > \Sp_{*})$ \hspace{1.15cm} (probability to overestimate $\Se_*$ and $\Sp_*$)\\
		\textbf{Other:}\\
		$\E S$ \quad (expected number of models selected for evaluation study) 
	}
	\label{tab:mle_sim_table}
\end{table}

\end{document}